\numberwithin{table}{section}
\numberwithin{figure}{section}
\numberwithin{equation}{section}
\definecolor{darkblue}{rgb}{.2, 0.2,.8}
\definecolor{darkgreen}{rgb}{0,0.5,0.3}
\definecolor{darkred}{rgb}{.8, .1,.1}
\newcommand{\bfx}{\vect{x}}
\newcommand{\bfX}{\vect{X}}
\newcommand{\bfalp}{\vect{\alpha}}
\newcommand{\bftheta}{\vect{\theta}}
\newcommand{\bfbeta}{\vect{\beta}}
\newcommand{\bfT}{\mat{T}}
\newcommand{\bft}{\vect{t}}
\newcommand{\bfe}{\vect{e}}
\newcommand{\bfp}{\vect{\pi}}
\newcommand{\0}{\mat{0}}
\renewcommand{\P }{{\mathrm{pr}}}
\newcommand{\eqd}{\stackrel{d}{=}}
\newtheorem{lemma}{Lemma}[section]
\newtheorem{proposition}[lemma]{Proposition}
\newtheorem{definition}[lemma]{Definition}
\newtheorem{condition}[lemma]{Condition}
\newtheorem{remark}{Remark}[section]
\newcommand{\vect}[1]{\pmb{#1}}
\newcommand{\mat}[1]{\boldsymbol{\bm #1}}
\DeclareMathOperator*{\argmax}{arg\,max}
\begin{document}
\bibliographystyle{apalike}
\title[Cure models: from mixture to matrix distributions]{Cure models: from mixture to matrix distributions}

\author[M. Bladt]{Martin Bladt}
\address{Department of Mathematical Sciences, 
University of Copenhagen,
DK-2100 Copenhagen, 
Denmark}
\email{martinbladt@math.ku.dk}

\author[J. Yslas]{Jorge Yslas}
\address{Institute for Financial and Actuarial Mathematics,
University of Liverpool,
L69 7ZL Liverpool,
UK}
\email{jorge.yslas-altamirano@liverpool.ac.uk}

\begin{abstract}
Cure rate models address survival data in which a proportion of individuals will never experience the event of interest. Existing parametric approaches are predominantly based on finite mixtures, which impose restrictive assumptions on both the cure mechanism and the distribution of susceptible event times. A cure model based on phase-type distributions is introduced, leveraging their latent Markov jump process representation to allow immunity to occur either at baseline or dynamically during follow-up. This structure yields a flexible and interpretable formulation of long-term survival while encompassing classical mixture cure models as special cases. A unified regression framework is developed for covariate effects on both the cure rate and the susceptible survival distribution, and the proposed model class is dense, reducing the impact of parametric misspecification. Estimation is performed via expectation–maximization algorithms, accompanied by an automatic model selection strategy. Simulation studies and a real-data example demonstrate the practical advantages of the approach.
\end{abstract}
\keywords{Cure models; Phase-type distributions; Regression}
\subjclass{Primary 62N02; Secondary 60J28; 62P10}
\maketitle

\section{Introduction}
In survival analysis, the standard assumption that all subjects eventually experience the event of interest is often violated. In many clinical studies, particularly in oncology, a proportion of patients achieve long-term remission and never experience disease recurrence, regardless of how long the follow-up period extends. These individuals are referred to as {\em cured}, and their presence is typically signaled by a leveling-off of the Kaplan-Meier survival curve at the tail of the follow-up period. Standard survival models, which assume all subjects will eventually experience the event, are inadequate in these scenarios and can lead to biased estimates and loss of information within the data~\citep[cf., e.g.,][]{amico2018cure}. 

To accommodate this phenomenon, cure rate models have been developed. The seminal work of~\citet{boag1949maximum} introduced the mixture cure model, later refined by~\citet{berkson1952survival}, which conceptualizes the population as comprising two sub-populations: susceptible individuals who eventually experience the event, and cured individuals who do not. Under this framework, the survival function $S(t)$ is modeled as $$ S(t) = (1 - p) + p \cdot S_u(t), $$ where $p$ is the proportion of susceptible individuals (so that $1-p$ is the cure rate), and $S_u(t)$ is the survival function for the susceptible sub-population. This formulation naturally separates the model into an {\em incidence} component (the probability of being susceptible) and a {\em latency} component (the time-to-event distribution for susceptibles). Several extensions of this framework have been proposed; for instance,~\citet{farewell1982use} extended it by incorporating covariates into the incidence component via logistic regression, while~\citet{Kuk1992, SyTaylor2000} developed semiparametric approaches using proportional hazards models for the latency. We refer to~\citet{MallerZhou1996, peng2021cure} for a comprehensive account of mixture cure rate models with various survival functions.  

Alternative formulations to the mixture cure rate model have been introduced in the literature. For instance, the promotion time cure model was introduced by~\citet{Yakovlev1996} and further developed by~\citet{ChenIbrahimSinha1999}. This approach is motivated by biological considerations, in which the event time is conceptualized as the first arrival among a random number of competing risks, with the number of risks following a Poisson distribution. Frailty-based approaches provide yet another perspective, where cure arises naturally when the frailty distribution has an atom at zero~\citep{price2001modelling}. Despite their conceptual differences, these approaches can often be expressed within a common mixture-type representation separating incidence and latency.

While a wide range of parametric and semiparametric cure models has been proposed, practical applications frequently reveal complex, multi-modal, or nonmonotone hazard structures that are difficult to capture with standard distributions. Moreover, most of these frameworks assume that cure status is determined at baseline (time zero), which may not always be appropriate when cure can effectively occur at any time during follow-up. These challenges are compounded in regression settings, where both the cure probability and the latency distribution may depend on covariates. As a result, there is a growing interest in highly flexible yet tractable cure-rate models that can accommodate complex survival dynamics without sacrificing interpretability or computational feasibility. 

In this paper, we introduce a novel and flexible class of cure rate models based on phase-type distributions. Defined as the distribution of the absorption time of a finite-state continuous-time Markov jump process, phase-type distributions are dense in the class of distributions on $[0,\infty)$ and enjoy attractive analytical and computational properties, including closed-form expressions for key functionals and efficient maximum-likelihood estimation via expectation-maximization (EM) algorithms; see~\cite{Bladt2017} for a comprehensive treatment. Their application to survival analysis was first explored by~\cite{aalen1995phase}, who demonstrated their flexibility in modeling various hazard function shapes. More recent applications in biostatistical and actuarial contexts include mortality modeling~\citep{albrecher2022survival} and frailty models~\citep{yslas2025phase}.
 
 Our proposed specification extends the phase-type framework by defining a Markov jump process on a state space that includes two distinct absorbing states: one representing the event of interest (e.g., death or disease recurrence) and another representing immunity (cure). This structure, driven by a hidden Markov jump process, provides a richer and more dynamic interpretation of the curing mechanism. Under this formulation, a subject can either be immune from the outset or can transition to an immune state over time, generalizing the static mixture model, which assumes immunity is determined at time zero. 
 
The contribution of this paper extends existing cure-rate modeling in several important directions. We develop a flexible parametric framework that naturally accommodates a cure fraction through an absorbing state representing immunity, and we derive explicit expressions for both the cure rate and the latency distribution in terms of the underlying Markov process parameters. Building on this, we introduce a regression framework based on the Mixture-of-Experts (MoE) specification~\citep{bladt2022moe}, allowing covariates to simultaneously influence the cure probability and the survival distribution of susceptibles. We further establish the denseness of this class of MoE models within the family of cure regression models, ensuring that any well-behaved cure model can be approximated arbitrarily well and thereby reducing the risk of parametric misspecification. Finally, we develop effective estimation procedures based on the EM algorithm, along with practical tools for model selection and goodness-of-fit assessment using Cox-Snell-type residuals.

The remainder of this paper is structured as follows. Section~\ref{sec2} introduces the cure rate phase-type model, derives the explicit expression for the cure rate, and details the estimation procedure. Section~\ref{sec3} discusses the regression framework. Section~\ref{sec4} addresses the practical issue of model selection and goodness-of-fit. Section~\ref{sec5} and Section~\ref{sec6} present a simulation study and an application to a leukemia dataset, respectively. Finally, Section~\ref{sec:conc} concludes the paper.

\section{Absorption times of Markov jump-processes}\label{sec2}

\subsection{General specification}\label{subsec1}

We consider a time-homogeneous Mar\-kov pure-jump process $ ( \mathcal{X}_t )_{t \geq 0}$ on a finite state space which we may label as $\{1,2, \dots, r, r+1\}$, where states $1,\dots,r-1$ are transient, state $r$ represents immunity, and is absorbing, and state $r+1$ represents the event of interest, death, and is also absorbing.  Then, the transition probabilities
\begin{align*}
p_{kl}(s,t)=\P(\mathcal{X}_t=l\mid \mathcal{X}_s=k)\,,\quad 1\le k,l\le r+1 \,, \quad 0\le s< t \,,
\end{align*}
can be written in matrix form as 
\begin{align*}
\mat{P}(s,t)=\exp(\mat{\Lambda}(t-s)),	
\end{align*}
where $\mat{\Lambda}$ is called the intensity matrix, having negative diagonal elements (for non-absorbing states) and non-negative off-diagonal elements such that the rows sum to zero, and the exponential of a square matrix $\mat{A}$ is defined by the power series 
\begin{align*}
	\exp(\mat{A}) = \sum_{j= 0}^{\infty} \frac{\mat{A}^{j}}{j!} \, .
\end{align*}

Our assumptions imply that we may write
\begin{align*}
	\mat{\Lambda}=\left( \begin{array}{cc}
		\bfT &  \bft \\
		\0 & 0
	\end{array} \right)\in\mathbb{R}^{(r+1)\times(r+1)}\,,
\end{align*}
where $\bfT $ is an $r \times r$ sub-intensity matrix, $\bft$ is an $r$-dimensional column vector providing the exit rates to the absorbing state describing the event of interest, and $\0$ is a $r$-dimensional row vector of zeroes.  In fact, the last row and last element of $\bfT=(t_{kl})_{k,l=1,\dots,r}$ and $\bft=(t_1,\dots,t_r)^{\mathsf{T}}$, respectively, are also zero.  Moreover, since the rows of the intensity matrix sum to zero, the relationship $\bft=- \bfT \, \bfe$ is seen to hold, where $\bfe$ denotes the $r$-dimensional column vector of ones.  

Though both $r$ and $r+1$ are absorbing states, we are interested in them for different reasons.  On the one hand, we would like to estimate the proportion of subjects who eventually end up being absorbed in $r$, which, following cure rate terminology, we may refer to as \textit{insufficiency}.  On the other hand, we would like to estimate the distribution of the time-to-event, that is, the first hitting time to the absorbing state $r+1$, occurring only to subjects who have no immunity, and which we refer to as \textit{latency}.  

\begin{definition}
Let $\mathcal{X}_0\sim\bfp = (\pi_1, \dots, \pi_r)$ be the initial distribution on $\{1,\dots,r\}$.  We say that
$$\tau = \inf \{ t >  0 : \mathcal{X}_t = r+1 \}\,,$$ 
follows a (defective) phase-type distribution and we write  $\tau \sim \mbox{PH}(\bfp, \bfT)$.  
\end{definition}

The above definition makes it clear how to deal with latency, which is by analyzing the distribution of $\tau$. The density $f_\tau$ and distribution function $F_\tau$ of $\tau \sim  \mbox{PH}(\bfp, \bfT)$ are explicit in terms of functions of matrices, as follows, for $t\ge0$:
\begin{align}
 f_\tau(t) &= \, \vect{\pi}\exp \left(  \mat{T} t \right)\vect{t}\,, \nonumber\\
 F_\tau(t) &=  1- \vect{\pi}\exp \left( \mat{T} t \right)\vect{e}\, \,.\label{ph_cdf}
\end{align}

\subsection{The cure rate}\label{subsec_curerate}
Insufficiency may be retrieved by considering
\begin{align*}
p:=\P(\tau<\infty)=\lim_{t\to\infty}F_\tau(t)\in(0,1] \,,
\end{align*}
estimating the susceptible fraction of the population, and thus $1-p$ is the \textit{cure rate}, or immune fraction. If the dynamics of the underlying Markov process do not allow for transitions from one of the states $1,\cdots,r-1$ to $r$, then it is clear that $1-p=\pi_r$. However, in the case where immunity may be obtained during the lifetime of the process $\mathcal{X}$, we have the following result. 

We use the notation $\bfT_{a:b,c:d}$, $a,b,c,d\in\mathbb{N}$ to denote the sub-matrix of $\bfT$ running from $a\ge1$ to $b\ge a$ in the first coordinate and from $c\ge1$ to $d\ge c$ in the second coordinate. For a $q$-dimensional vector $\vect{s}$, we denote by $\vect{s}_{a:b}$ the sub-vector with indices from $a\ge1$ to $b\ge a$. We also define $\Delta(\vect{s})$ as the $q \times q$ diagonal matrix with entries those of $\vect{s}$ on its main diagonal. Throughout, $\0$ denotes a vector or matrix of zeros of appropriate dimension.

\begin{proposition}\label{prop:cure_rate}
The cure rate of $\tau \sim \mbox{PH}(\bfp, \bfT)$ is given by
\begin{align}
1-p=\P(\tau=\infty)=\bfp_{1:(r-1)}[-\bfT_{1:(r-1),1:(r-1)}]^{-1}\bfT_{1:(r-1),r} + \pi_r \,. \label{cure_rate_ph}
\end{align}
Moreover, the survival function $ S_\tau  = 1 -  F_\tau $ of $\tau$ admits the representation
\begin{align*}
 S_\tau(t) &= (1 - p) + p \cdot S_u(t) \,,
\end{align*}
where $S_u$ denotes the survival function of a phase-type distribution with vector of initial probabilities
\begin{align}\label{eq:latency}
\frac{\bfp_{1:(r-1)}[-\bfT_{1:(r-1),1:(r-1)}]^{-1} [-\bfT_{1:(r-1),1:(r-1)} -  \Delta(\bfT_{1:(r-1),r}) ]}{\bfp_{1:(r-1) }[-\bfT_{1:(r-1),1:(r-1)}]^{-1} \bft_{1:(r-1)} }
\end{align}
and sub-intensity matrix $\bfT_{1:(r-1),1:(r-1)}$. 

\end{proposition}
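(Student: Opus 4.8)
The plan is to read off both assertions from one explicit evaluation of the (defective) survival function $S_\tau(t)=\bfp\exp(\bfT t)\bfe$, exploiting the block structure that the assumptions force on $\bfT$. Since state $r$ is absorbing and $t_r=0$, the last row of $\bfT$ and the last entry of $\bft$ vanish, and with
\begin{align*}
\bfT=\begin{pmatrix}\bfT_{1:(r-1),1:(r-1)} & \bfT_{1:(r-1),r}\\[2pt] \0 & 0\end{pmatrix},\qquad \bfe=\begin{pmatrix}\bfe_{1:(r-1)}\\ 1\end{pmatrix},
\end{align*}
the block upper-triangular form (with zero bottom row) gives
\begin{align*}
\exp(\bfT t)=\begin{pmatrix}\exp(\bfT_{1:(r-1),1:(r-1)}\,t) & \big(\textstyle\int_0^t \exp(\bfT_{1:(r-1),1:(r-1)}\,s)\,ds\big)\bfT_{1:(r-1),r}\\[2pt] \0 & 1\end{pmatrix}.
\end{align*}
I would then use $\int_0^t\exp(\bfT_{1:(r-1),1:(r-1)}s)\,ds=[-\bfT_{1:(r-1),1:(r-1)}]^{-1}(\mat I-\exp(\bfT_{1:(r-1),1:(r-1)}t))$ to make every appearance of the exponential explicit.

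For the cure rate I let $t\to\infty$. As $\bfT_{1:(r-1),1:(r-1)}$ is a proper sub-intensity matrix its exponential vanishes and the integral converges to $[-\bfT_{1:(r-1),1:(r-1)}]^{-1}$, so the limit of $S_\tau(t)$ read from the display is $\pi_r+\bfp_{1:(r-1)}[-\bfT_{1:(r-1),1:(r-1)}]^{-1}\bfT_{1:(r-1),r}$, which is \eqref{cure_rate_ph}. I would cross-check this probabilistically: $\pi_r$ is the mass that starts immune, while $[-\bfT_{1:(r-1),1:(r-1)}]^{-1}$ is the Green (expected-occupation) matrix of the transient block, so $[-\bfT_{1:(r-1),1:(r-1)}]^{-1}\bfT_{1:(r-1),r}$ is the vector of probabilities of being absorbed into $r$ rather than into $r+1$.

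For the latency I form $pS_u(t)=S_\tau(t)-(1-p)$ from the same displays. The constant and the $\pi_r$-terms cancel, and using the row-sum identity $[-\bfT_{1:(r-1),1:(r-1)}]\bfe_{1:(r-1)}=\bfT_{1:(r-1),r}+\bft_{1:(r-1)}$ to simplify, the surviving expression collapses to
\begin{align*}
p\,S_u(t)=\bfp_{1:(r-1)}\exp(\bfT_{1:(r-1),1:(r-1)}\,t)\,[-\bfT_{1:(r-1),1:(r-1)}]^{-1}\bft_{1:(r-1)}.
\end{align*}
Evaluating at $t=0$ forces the normalising constant to be $\bfp_{1:(r-1)}[-\bfT_{1:(r-1),1:(r-1)}]^{-1}\bft_{1:(r-1)}=p$, which is exactly the denominator of \eqref{eq:latency}; this is a useful consistency check before the final identification.

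The crux is to recognise $p^{-1}\bfp_{1:(r-1)}\exp(\bfT_{1:(r-1),1:(r-1)}t)\,[-\bfT_{1:(r-1),1:(r-1)}]^{-1}\bft_{1:(r-1)}$ as the survival function $\bfalp\exp(\bfT_{1:(r-1),1:(r-1)}t)\bfe_{1:(r-1)}$ of a genuine phase-type law, with $\bfalp$ equal to \eqref{eq:latency}. The obstacle is structural: the right-hand weighting here is the absorption-probability vector $[-\bfT_{1:(r-1),1:(r-1)}]^{-1}\bft_{1:(r-1)}$, whereas a phase-type survival function carries the all-ones vector on the right and the initial probabilities on the left, so the adjustment must be transported across the exponential by a diagonal factor built from the immunity rates $\Delta(\bfT_{1:(r-1),r})$ — and that diagonal does not commute with $\exp(\bfT_{1:(r-1),1:(r-1)}t)$. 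This non-commutativity is the step I expect to require the most care; a Doob $h$-transform (diagonal similarity) is the natural device, and pushing it through should produce the combination $[-\bfT_{1:(r-1),1:(r-1)}]^{-1}[-\bfT_{1:(r-1),1:(r-1)}-\Delta(\bfT_{1:(r-1),r})]$ in the numerator of \eqref{eq:latency}. A feature that makes the identification delicate, and that I would treat as the main thing to verify, is that the stated sub-intensity matrix is the \emph{original} $\bfT_{1:(r-1),1:(r-1)}$ rather than a conjugated version, so the entire reweighting is forced onto the initial vector; I would close by checking that \eqref{eq:latency} is non-negative (from the sign pattern of $[-\bfT_{1:(r-1),1:(r-1)}]^{-1}$) and sums to one, the latter reducing precisely to the cure-rate identity \eqref{cure_rate_ph}.
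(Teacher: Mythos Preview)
Your route coincides with the paper's: both compute $\exp(\bfT t)$ via its block upper-triangular (Van Loan) form, take $t\to\infty$ for the cure rate, and substitute back for the latency. For the first assertion your argument is complete and identical to the paper's. For the second, the paper's proof says only that it ``follows straightforwardly by substituting~\eqref{eq:vanloan} into~\eqref{ph_cdf} and rearranging,'' whereas you carry out that substitution and correctly reduce to $pS_u(t)=\bfp_{1:(r-1)}\exp(\bfT_{1:(r-1),1:(r-1)}t)\,[-\bfT_{1:(r-1),1:(r-1)}]^{-1}\bft_{1:(r-1)}$.

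Your instinct that the final identification is the delicate step---and that a Doob $h$-transform is the natural device---is exactly right, but your expectation that it will reproduce \eqref{eq:latency} with the \emph{original} sub-intensity $\bfT_{1:(r-1),1:(r-1)}$ is misplaced. The $h$-transform with $h=[-\bfT_{1:(r-1),1:(r-1)}]^{-1}\bft_{1:(r-1)}$ yields the phase-type representation with initial vector $p^{-1}\bfp_{1:(r-1)}\Delta(h)$ and sub-intensity $\Delta(h)^{-1}\bfT_{1:(r-1),1:(r-1)}\Delta(h)$; the sub-intensity is genuinely conjugated, not the original. In fact no vector $\bfalp$ can make $\bfalp\exp(\bfT_{1:(r-1),1:(r-1)}t)\bfe_{1:(r-1)}$ equal to $S_u(t)$ in general: with $r=3$, $\bfT_{1:2,1:2}=\bigl(\begin{smallmatrix}-3&1\\0&-2\end{smallmatrix}\bigr)$, $\bfT_{1:2,3}=(1,0)^{\mathsf T}$, $\bft_{1:2}=(1,2)^{\mathsf T}$, $\bfp_{1:2}=(1,0)$, one gets $S_u(t)=\tfrac32 e^{-2t}-\tfrac12 e^{-3t}$, whereas $\exp(\bfT_{1:2,1:2}t)\bfe=(e^{-2t},e^{-2t})^{\mathsf T}$ so $\bfalp\exp(\bfT_{1:2,1:2}t)\bfe$ can never contain an $e^{-3t}$ term, and \eqref{eq:latency} evaluates to $\bfalp=(1,0)$, giving $e^{-2t}\ne S_u(t)$. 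So the step you flagged as ``the main thing to verify'' does not go through; the paper's one-line ``rearranging'' hides a gap, and your $h$-transform delivers the correct latency representation but with a conjugated generator rather than the one stated.
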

\begin{proof}
By Theorem 1 of~\cite{VanLoan1978}, we may represent the matrix exponential of a block matrix as
\begin{align}
\exp \left( \mat{T} t \right)= \left( \begin{array}{cc}
		\exp(\bfT_{1:(r-1),1:(r-1)}t) &  \int_{0}^t \exp({\bfT_{1:(r-1),1:(r-1)}(t-z)}) \bfT_{1:(r-1),r} \mathrm{d}z\\
		\0 & 1
	\end{array} \right),
	\label{eq:vanloan}
\end{align}
from which we deduce that
\begin{align*}
\lim_{t\to\infty}\exp \left( \mat{T} t \right)= \left( \begin{array}{cc}
		\0 &  [-\bfT_{1:(r-1),1:(r-1)}]^{-1} \bfT_{1:(r-1),r}\\
		\0 & 1
	\end{array} \right).
\end{align*}
The first result then follows directly by matrix multiplication using equation~\eqref{ph_cdf}. The second part follows straightforwardly by substituting~\eqref{eq:vanloan} into~\eqref{ph_cdf} and rearranging for the term given in~\eqref{cure_rate_ph}. 
\end{proof}

\subsection{Censoring and estimation}\label{subsec_cens}
We work in the survival framework where we do not actually observe $\tau$, but rather $\xi=\min\{\tau,\zeta\}$, $\delta=1\{\tau\le\zeta\}$, where $\zeta$ is a censoring mechanism, independent of $\tau$.  Assume that we have replicate observations $(\xi_i,\delta_i)_{i=1,\dots,n}$.  Thus, for estimation purposes, one should solve the following optimization problem arising from maximum likelihood estimation:
\begin{align}
\argmax\limits_{(\bfp, \bfT)} \sum_{i = 1}^{n} \delta_i \log(f_\tau(\xi_i;\,\bfp, \bfT))+(1-\delta_i) \log (S_\tau(\xi_i;\,\bfp, \bfT)) \quad \mbox{s.t.}\quad \bfT_{r,1:r}=\0. 
\end{align}
Then the cure rate is obtained as a by-product, using~\eqref{cure_rate_ph}.  In other words, estimation of $p$ is not numerically more complex than estimating the underlying general phase-type distribution; in fact, under our framework, both are naturally estimated at the same time.  

Direct optimization, however, is somewhat intensive using gradient methods.  Instead, we employ a slightly adapted version of the expectation-maximization (EM) algorithm from~\cite{AsmussenNermanOlsson1996,Olsson1996}, using the latent path of the underlying Markov jump process.  Indeed, care must be taken regarding setting the row of the immune state $r$ to zeros at the initial parameters, but the convenient thing about the algorithm is that it naturally respects zeros.  In other words, the constraint is respected at any iteration if the initial value does so as well.  The algorithm is provided for completeness in Algorithm~\ref{algo1}, and is implemented using, for example, the \texttt{matrixdist}  R package~\citep{bladt2025matrixdist}.
\begin{algorithm}[!htp]
\caption{Cure rate phase-type estimation}\label{algo1}
\begin{algorithmic}[1]
\Require Observations $(\xi_i,\delta_i)_{i=1,\dots,n}$, and initial parameters $(\bfp,\bfT)$ with $\bfT_{r,1:r}=\0$. 
\Ensure Estimates for the triple $(p,\bfp,\bfT)$.  
\State Set $\vect{\eta}_i=\delta_i\bft+(1-\delta_i)\bfe$.
\State Initialize with $\ell=1.$
\While{$\ell \le N_{\text{steps}}$}
	\item Calculate
\begin{align*}
	& A_k
	=\sum_{i=1}^{n}  \frac{  \pi_k \, \bfe^{\top}_{k} \exp( \bfT \xi_i ) \vect{\eta}_i }{\bfp \exp( \bfT \xi_i ) \vect{\eta}_i }  ,\quad B_k
	 = \sum_{i=1}^{n} \frac{ \int^{\xi_i}_{0}  \bfe^{\top}_{k} \exp({ \bfT(\xi_i-u)}) \vect{\eta}_i \bfp \exp( \bfT u )\bfe_{k}   du }{\bfp \exp( \bfT \xi_i ) \vect{\eta}_i} , \\[3mm]
	& C_{kl}
	 = \sum_{i=1}^{n}  t_{kl} \frac{ \int^{\xi_i}_{0}  \bfe^{\top}_{l} \exp( \bfT(\xi_i-u)) \vect{\eta}_i \bfp \exp( \bfT u )\bfe_{k}   du }{\bfp \exp( \bfT \xi_i ) \vect{\eta}_i},  \quad  D_k
	 = \sum_{i:\, \delta_i=1}  t_{k}   \frac{\bfp \exp( \bfT \xi_i )\bfe_{k}  }{\bfp \exp( \bfT \xi_i ) \bft} .
\end{align*}
	\item Let
	\begin{align*}
		&\pi_{k} =\frac{A_k}{n}  
		\,, \quad
		t_{kl} = \frac { C_{kl} }{B_k }
		\,, \quad t_{k} = \frac {D_k } { B_k}
		\,, \quad
		t_{kk} = -\sum_{l \neq k} t_{kl} -t_{k} \,.
	\end{align*}
\State $\ell = \ell+1$
\EndWhile
\State Set $p=1-\bfp_{1:(r-1)}[-\bfT_{1:(r-1),1:(r-1)}]^{-1}\bfT_{1:(r-1),r} - \pi_r.$
\end{algorithmic}
\end{algorithm}

\subsection{Inhomogeneity}

Distributions following~\eqref{ph_cdf} are dense in the sense of weak convergence on all distribution functions with non-negative support~\citep[cf.][]{Bladt2017}.  
This implies that any target distribution can be approximated arbitrarily closely by a phase-type model provided the order $r$ is allowed to increase sufficiently. 
In practice, this means that a given dataset may sometimes require  a rather large dimension $r$ to obtain an adequate fit.  A possible workaround is to consider inhomogeneous phase-type distributions, which introduce a time transformation, allowing for more flexible tail behaviors without increasing the dimension $r$.

Originally introduced by~\cite{albrecher2019inhomogeneous}, inhomogeneous phase-type distributions generalize phase-type distributions by allowing the Markov process $( \mathcal{X}_t )_{t \geq 0}$ to be time-inhomogeneous. This means the process is governed by a time-dependent intensity matrix
\begin{align*}
	\mat{\Lambda}(t)=\left( \begin{array}{cc}
		\bfT(t) &  \bft(t) \\
		\0 & 0
	\end{array} \right)\in\mathbb{R}^{(r+1)\times(r+1)}\,,
\end{align*}
where $ \bft(t) = -\bfT(t)\bfe$. While this general formulation is mathematically tractable, it is often too complex for direct applications as its functionals are given in terms of product integrals. A particularly useful subclass arises when $\bfT(t)=\lambda(t)\bfT$, in which case we write $\tau \sim \mbox{IPH}(\bfp, \bfT, \lambda)$. 
 Here, $\mat{T}$ is a sub-intensity matrix, and $\lambda(t)$ is a non-negative, real-valued function, known as the intensity function, that can be interpreted as a rate that speeds up or slows down the transitions of the process over time. 
  Members of this subclass also allow for a transformation interpretation of a phase-type model.
If $\tau \sim \mbox{IPH}(\bfp, \bfT, \lambda)$, then there exists a function $g(t)$ such that $\tau \eqd g(Y)$, where $Y \sim \mbox{PH}(\bfp, \bfT)$. The transformation $g(t)$ relates to the intensity function $\lambda(t)$ via $g^{-1}(t) = \int_{0}^{t} \lambda(s) ds$.

The density and survival functions for $\tau \sim \mbox{IPH}(\bfp, \bfT, \lambda)$ are given by
\begin{align*}
 f_\tau(t) &= \lambda(t) \, \vect{\pi}\exp \left( g^{-1}(t) \mat{T}  \right)\vect{t} \,, \\
 S_\tau(t) &= \vect{\pi}\exp \left( g^{-1}(t) \mat{T}  \right)\vect{e}\,.
\end{align*}
Regarding the cure rate, it is clear from the form of the survival function and the proof of Proposition~\ref{cure_rate_ph} that the cure rate is given by~\eqref{cure_rate_ph}.

This framework provides significant flexibility. The asymptotic tail behavior of the distribution is determined by the choice of $g(t)$ (or equivalently of $\lambda(t)$), allowing the model to capture a wide variety of tail shapes beyond the purely exponential tails of standard phase-type distributions. By selecting different forms for $g(t)$, several classical distributions commonly used in survival analysis, such as the Weibull, Lognormal, Log-logistic, and Gompertz distributions, can be extended to have matrix-valued parameters, creating a rich and versatile family of distributions for cure rate modeling. The specific transformations leading to these matrix extensions can be found in Table~\ref{tab:inho}, and we refer to~\cite{albrecher2019inhomogeneous, albrecher2022fitting, albrecher2022survival} for further details and applications of these models.

\begin{table}[!htb]
\centering
\caption{Transformations and survival functions of some inhomogeneous phase-type models.}
\begin{tabularx}{\textwidth}{lcccc}
\toprule
& \makecell{${g(t)}$} & \makecell{{Parameters}\\ {Domain}} & \makecell{${S_\tau(t)}$}  \\
\midrule
Matrix-Weibull  & 
    $t^{1/\beta}$ & 
    $\beta>0$ & 
    $\bfp \exp\left( \bfT t^{\beta} \right)\bfe$ \\[4mm]

Matrix-Lognormal & 
    $\exp(t^{1/\gamma})-1$ & 
    $\gamma > 1$ & 
    $\bfp \exp\left(\bfT \left(\log \left(t+1\right)\right)^{\gamma}\right) \bfe $ \\[4mm]

Matrix-Log-logistic  & 
    $\gamma (\exp(t)-1)^{1/\theta}$ & 
    $\gamma,\theta >0$ & 
    $\bfp \exp\left( \bfT \log\left( \left( t/\gamma \right)^{\theta} + 1 \right) \right) \bfe$ \\[4mm]

Matrix-Gompertz  & 
    $\log( \beta t + 1 ) / \beta$ & 
    $\beta > 0$ & 
    $\bfp \exp \left( \bfT (\exp\left(\beta t\right) -1) / \beta \right) \bfe$ \\[2mm]
    
\bottomrule
\end{tabularx}
\label{tab:inho}
\end{table}

Estimation of the inhomogeneous phase-type model parameters is straightforward using a modified version of Algorithm~\ref{algo1}. The approach, introduced in~\cite{albrecher2022fitting}, embeds the update for the transformation's parameters within the EM framework.
The main idea is to assume that $g(\cdot)$  is a parametric function depending on some parameter $\bftheta$, let us say $g(\cdot; \bftheta)$, and  proceed iteratively as follows. In each step, holding the transformation parameter $\bftheta$ fixed, an inverse transformation $g^{-1}(t; \bftheta)$ is applied to the observed data. This yields a new dataset that can be treated as a sample from a standard phase-type distribution, allowing the parameters $\vect{\pi}$ and $\mat{T}$ to be updated using the standard EM steps from Algorithm~\ref{algo1}. Following this, an additional step is performed in which the parameter of the transformation $g(t;\bftheta)$ is updated by directly maximizing the incomplete-data loglikelihood, holding $\vect{\pi}$ and $\mat{T}$ fixed. As shown in~\cite{albrecher2022fitting}, these iterative steps guarantee that the loglikelihood increases at each iteration.

\section{Regression on the cure rate and survival shape}\label{sec3}
To understand how individual characteristics affect both the probability of being cured and the survival pattern of the non-cured population, we introduce a regression framework based on the concept of Mixture-of-Experts (MoE), as originally introduced in~\cite{bladt2022moe}. The construction principle consists of letting the covariates determine the initial state of the underlying Markov process, while the sub-intensity matrix $\bfT$ remains common across individuals. This is called the phase-type MoE regression model. The different initial states can be seen as ``experts", each corresponding to a specific phase-type survival distribution. The covariates then assign a particular mixture of these experts to each individual. By determining the process starting point, covariates shape the entire survival trajectory, including the cure rate and the latency distribution.

\subsection{Phase-type MoE}
For an individual with covariate column vector $\bfX = \bfx$, we allow the initial distribution of the Markov process to depend on $\bfx$ by specifying a covariate-dependent initial probability vector $\vect{\pi}(\bfx) = (\pi_1(\bfx), \dots, \pi_r(\bfx))$. Specifically, we model $\vect{\pi}(\bfx)$ using a multinomial logistic (softmax) regression,
\begin{align*}
\pi_k(\bfx) = \P(\mathcal{X}_0 = k \mid \bfx) = \frac{\exp(\vect{\beta}_k \bfx)}{\sum_{j=1}^{r} \exp(\vect{\beta}_j \bfx)}\,, \quad k=1, \dots, r\,,
\end{align*}
where $\vect{\beta}_k$ is a $h$-dimensional vector of regression coefficients associated with the $k$-th initial state. 
The resulting covariate-dependent survival function is then given by
\begin{align*}
S_\tau(t \mid \bfx) = \bfp(\bfx) \exp(\mat{T}t)\vect{e},
\end{align*}
yielding the following expression for the covariate-dependent cure rate
\begin{align*}
1 - p(\bfx) = \bfp_{1:(r-1)}(\bfx)[-\bfT_{1:(r-1),1:(r-1)}]^{-1}\bfT_{1:(r-1),r} + \pi_r(\bfx).
\end{align*}
This expression, together with the equivalent of~\eqref{eq:latency}, shows explicitly how covariates, by influencing the initial probabilities $\pi_k(\bfx)$, simultaneously affect the proportion of individuals who are cured and the survival dynamics of those who are not.

Estimation of the phase-type MoE regression model can be performed using a modified EM algorithm based on Algorithm~\ref{algo1}. While the E-step remains conceptually the same, the M-step is adapted to handle the regression coefficients $\vect{\beta} = (\vect{\beta}_k)_{k = 1, \dots, r}$. This is achieved by fitting a weighted multinomial logistic regression, where the weights correspond to the expected number of times the process starts in each state for each individual, as computed in the E-step. For a detailed derivation of this procedure, we refer to~\cite{bladt2022moe}.
For completeness, we provide the general algorithm that incorporates both the regression framework and the inhomogeneity transformation in Algorithm~\ref{algo2}.

\begin{algorithm}[!htp]
\caption{Cure rate inhomogeneous phase-type MoE estimation}\label{algo2}
\begin{algorithmic}[1]
\Require Observations $(\xi_i,\delta_i, \bfx_i)_{i=1,\dots,n}$, and initial parameters $(\vect{\beta}, \bfT, \bftheta)$ with $\bfT_{r,1:r}=\0$. 
\Ensure Estimates for the triple $(\vect{\beta},\bfT, \bftheta)$ and covariant-dependent susceptible fractions $p(\bfx_i)$.  
\State Set $\vect{\eta}_i=\delta_i\bft+(1-\delta_i)\bfe$ and 
\begin{align*}
		\pi_k(\bfx_i) =  \frac{\exp(\vect{\beta}_k \bfx_i)}{\sum_{j=1}^{r} \exp(\vect{\beta}_j \bfx_i)}, \quad i = 1,\dots, n, \; k=1, \dots, r.
	\end{align*}
\State Initialize with $\ell=1.$
\While{$\ell \le N_{\text{steps}}$}

	\item Transform the data into $y_i = g^{-1}(\xi_i;\bftheta)$.
	\item Calculate
\begin{align*}
	& A_k^{i}
	=  \frac{  \pi_k(\bfx_i)\, \bfe^{\top}_{k} \exp( \bfT y_i ) \vect{\eta}_i }{\bfp(\bfx_i) \exp( \bfT y_i ) \vect{\eta}_i }  , \quad B_k
	 = \sum_{i=1}^{n} \frac{ \int^{y_i}_{0}  \bfe^{\top}_{k} \exp({ \bfT(y_i-u)}) \vect{\eta}_i \bfp(\bfx_i) \exp( \bfT u )\bfe_{k}   du }{\bfp(\bfx_i) \exp( \bfT y_i ) \vect{\eta}_i} , \\[3mm]
	& C_{kl}
	 = \sum_{i=1}^{n}  t_{kl} \frac{ \int^{y_i}_{0}  \bfe^{\top}_{l} \exp( \bfT(y_i-u)) \vect{\eta}_i \bfp(\bfx_i) \exp( \bfT u )\bfe_{k}   du }{\bfp(\bfx_i) \exp( \bfT y_i ) \vect{\eta}_i},  \quad  D_k
	 = \sum_{i:\, \delta_i=1}  t_{k}   \frac{\bfp(\bfx_i) \exp( \bfT y_i )\bfe_{k}  }{\bfp(\bfx_i) \exp( \bfT y_i ) \bft} .
\end{align*}
	\item Let
	\begin{align*}
		t_{kl} = \frac { C_{kl} }{B_k }
		\,, \quad t_{k} = \frac {D_k } { B_k}
		\,, \quad
		t_{kk} = -\sum_{l \neq k} t_{kl} -t_{k} \,,
	\end{align*}
	\begin{align*}
	\bfbeta=\argmax_{\bfbeta\in{\mathbb{R}}^{(r\times h)}}\sum_{i=1}^{n}\sum_{k = 1 }^{r} A_k^{i} \log(\pi_k(\vect{x}_i) )\,.
\end{align*} 
\item Let
		\begin{align*}
	 \bftheta
	&= \argmax_{ \bftheta } \sum_{i=1}^{n} \log \left( \lambda(\xi_i ;  \bftheta )^{\delta_i} {\vect{\pi}}(\vect{x}_i) \exp\left({ g^{-1}(\xi_i;\bftheta){\bfT} }\right) \vect{\eta}_i \right) \,.
	\end{align*}
	
\State $\ell = \ell+1$
\EndWhile
\State Set $p(\bfx_i)=1-\bfp_{1:(r-1)}(\bfx_i)[-\bfT_{1:(r-1),1:(r-1)}]^{-1}\bfT_{1:(r-1),r} - \pi_r(\bfx_i)$.
\end{algorithmic}
\end{algorithm}

\subsection{Denseness on cure models}\label{subsec2}
Let $\tau$ be the time to the event of interest and let $Z$ be a binary cure indicator, where $Z=0$ if an individual is cured and $Z=1$ if they are susceptible.

\begin{definition}
	Let  $\mathcal{A}\subset\mathbb{R}^h$ be a given covariate space. A {cure regression model} is then defined as the collection of all conditional laws
\[
\Big\{
  \P(\tau\in \cdot \mid \boldsymbol{X}=\boldsymbol{x})
  : \boldsymbol{x}\in\mathcal{A}
\Big\}.
\]
Furthermore, a sequence of cure regression models is said to converge weakly (respectively, uniformly weakly) to a given model if the conditional distributions converge weakly for each $\boldsymbol{x}\in\mathcal{A}$ (respectively, uniformly weakly in $\boldsymbol{x}\in\mathcal{A}$).

\end{definition}
We are particularly interested in cure regression models that admit the general mixture representation
\begin{align}\label{eq:gen_cure}
	S(t\mid \bfx) = \P(\tau > t \mid \bfX=\bfx) = (1-p(\bfx)) + p(\bfx)S_u(t \mid \bfx)\,,
\end{align}
where $p(\bfx) = \P(Z=1 \mid \bfX=\bfx)$ is the probability of being susceptible, and $S_u(t\mid\bfx) = \P(\tau>t \mid Z=1, \bfX=\bfx)$ is the survival function for the susceptible subpopulation. Note that we have assumed, without loss of generality, that the incidence and latency depend on the same set of covariates, as we can always augment the covariates employed in each component to contain the same. 

\begin{remark} \rm 
The general mixture representation~\eqref{eq:gen_cure} encompasses several cure rate models commonly used in survival analysis. For instance, the classical mixture cure proportional hazards model arises when the latency component $S_u(t\mid\bfx)$ follows a proportional hazards structure and the incidence $p(\bfx)$ is modeled separately through a logistic regression. Frailty-based cure models also fit into~\eqref{eq:gen_cure} when the frailty distribution contains an atom at zero, in which case cured individuals correspond to those with zero frailty and the susceptible survival is given by the conditional frailty model. Similarly, promotion time cure models, where the event time is defined as the minimum of a random number of latent competing risks, admit a mixture representation with $p(\bfx)$ governing the probability of at least one active risk and $S_u(t\mid\bfx)$ describing the conditional survival given susceptibility.
\end{remark}

The key theoretical advantage of our proposed phase-type MoE model is its denseness, which means it is flexible enough to approximate any well-behaved cure regression model. To formalize this, we require some regularity conditions.

\begin{definition}
A feature space $\mathcal{A}$ is said to be regular if it is of the form $\mathcal{A}=\{1\}\times [a,b]^{h-1},$ $a,b\in\mathbb{R}$, that is, covariate vectors include an intercept and the remaining components are contained in a compact hypercube.
\end{definition}

\begin{condition}
A cure rate regression model is said to satisfy the tightness and Lipschitz conditions on $\mathcal{A}$ if the family of latency distributions $\{\P(\tau \in \cdot\,|\,\bfX=\vect{x}, Z = 1)\}_{\vect{x}\in\mathcal{A}}$ is tight, and for each $t\geq 0$, the function $\vect{x}\mapsto\P(\tau \le t \,|\,\bfX=\vect{x}, Z = 1)$ is Lipschitz continuous in $\mathcal{A}$.
\end{condition}

\begin{proposition}\label{prop:dense}
Let a cure regression model that admits the representation~\eqref{eq:gen_cure} satisfy the tightness and Lipschitz conditions on a regular $\mathcal{A}$.  Then there exists a sequence of phase-type MoE regression models that converges (uniformly) weakly to it.  
\end{proposition}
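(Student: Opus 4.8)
The plan is to exploit the mixture decomposition of Proposition~\ref{prop:cure_rate} to split the problem into approximating the incidence $\bfx\mapsto 1-p(\bfx)$ and the latency family $\{\P(\tau\in\cdot\mid\bfX=\bfx,Z=1)\}_{\bfx\in\mathcal{A}}$ separately, and then to recombine them inside a single phase-type MoE whose covariate dependence lives entirely in the softmax initial distribution. First I would specialize to the sub-class with $\bfT_{1:(r-1),r}=\0$, i.e. no dynamic transitions into the immune state; by~\eqref{cure_rate_ph} this forces $1-p(\bfx)=\pi_r(\bfx)$, so the immune probability is read directly off the initial vector, and by~\eqref{eq:latency} the latency is then the phase-type law on the transient block $\bfT_{1:(r-1),1:(r-1)}$ with renormalized initial vector $\bfp_{1:(r-1)}(\bfx)/p(\bfx)$. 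This reduces the target to matching a fixed cure probability and a phase-type latency, both controlled by one softmax output.

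Next I would fix the phase-type dictionary. Since $\{\P(\tau\in\cdot\mid\bfX=\bfx,Z=1)\}$ is tight, there is a single finite family of building blocks — concretely Erlang laws $\mathrm{Erlang}(k,\lambda)$, $k=1,\dots,N$, sharing a rate $\lambda$, which embed into one common upper-bidiagonal sub-intensity matrix and whose mixtures are weakly dense on $[0,\infty)$ \citep{Bladt2017} — such that for every $\bfx$ the latency is approximated to within $\varepsilon$ (in a metric for weak convergence) by a mixture over this family; tightness is precisely what makes the required $N,\lambda$ uniform in $\bfx$. Writing $\bfw(\bfx)=(w_1(\bfx),\dots,w_N(\bfx))$ for the mixing weights and $w_r(\bfx)=1-p(\bfx)$ for the immune mass, the target initial vector is the probability vector $(p(\bfx)\bfw(\bfx),w_r(\bfx))$.

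Then I would realize this $\bfx$-dependent target by a softmax. Partition the compact regular space $\mathcal{A}$ into finitely many small cells, chosen as the argmax (power-diagram) regions of a family of affine functions so that cell membership is expressible through the softmax normalization; on each cell the Lipschitz hypothesis makes the target vector $(p(\bfx)\bfw(\bfx),w_r(\bfx))$ nearly constant, with error controlled by the mesh. Replicating the dictionary across cells and taking softmax coefficients of the form $M\,\vect{a}_{j}+\log(\text{within-cell weight})$ lets the diverging scale $M$ concentrate the output on the block of the cell containing $\bfx$ while the additive log-offsets reproduce the fixed within-cell proportions; this is the mechanism of the MoE denseness construction of \citep{bladt2022moe}. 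A diagonal argument over mesh $\to 0$, dictionary richness ($N,\lambda\to\infty$), and scale $M\to\infty$ produces the sequence, and the mixture identity~\eqref{eq:gen_cure} turns the two component errors into a single bound on $S_\tau(\cdot\mid\bfx)$; uniformity follows since all controls are uniform over the finitely many cells.

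The main obstacle is this last step: all covariate dependence must pass through the initial softmax vector while $\bfT$ is shared across individuals, so one softmax must simultaneously encode the cure probability and a full, $\bfx$-varying mixing distribution over the latency experts, and do so uniformly. The delicate points are (i) showing the softmax approximates a piecewise-constant probability-vector target without the shared normalizing denominator coupling the cure mass to the latency weights in an uncontrollable way — resolved by the $M\,\vect{a}_{j}+\log(\cdot)$ coefficient structure, where the scale handles cell selection and the offsets handle within-cell proportions — and (ii) upgrading pointwise (per-$\bfx$) control to uniform weak convergence, which is exactly where the Lipschitz condition (bounding the variation of the target within each cell) and tightness (one finite dictionary good for all $\bfx$) are essential.
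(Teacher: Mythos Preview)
Your proposal is correct and follows essentially the same route as the paper: reduce to the sub-class with no dynamic transitions into the immune state so that the cure mass sits on a single softmax coordinate, take Erlang experts as the latency dictionary (embedded in a common sub-intensity block), and invoke an existing MoE denseness result to realize the required $\bfx$-dependent probability vector via the softmax. The only difference is one of presentation: the paper's proof is a two-line appeal to Theorem~3.3 and Proposition~3.1 of \citet{fung2019class}, merely checking that Erlang experts satisfy the relevant closure property and that the resulting LRMoE form is a special case of phase-type MoE, whereas you spell out the cell-partition/scale-and-offset construction and the role of tightness and Lipschitz continuity explicitly; the underlying argument is the same.
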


\begin{proof}
One can examine the proof of Theorem 3.3 in~\cite{fung2019class} to obtain that a modified version of logit-weighted reduced MoE (LRMoE) models of the form
\begin{align*}
	F(\xi ; \bfx) = \left( 1 - \frac{\exp(\bfalp_{(r + 1)} \bfx)}{\sum_{k = 1}^{r + 1} \exp(\bfalp_{k} \bfx)} \right) \left[ \sum_{j = 1}^{r} \left( \frac{\exp(\bfalp_{j} \bfx)}{\sum_{k = 1}^{r} \exp(\bfalp_{k} \bfx)} \right) F(\xi ; \bftheta_j) \right]
\end{align*}
with experts $F( \cdot ;\bftheta_j)$ satisfying Property 3 of Proposition 3.1 in~\cite{fung2019class} posses the corresponding denseness property. When the experts are chosen to be Erlang distributions, one obtains a particular instance of the phase-type MoE model. Given that Erlang distributions meet the aforementioned property, the result follows.  
\end{proof}

\section{Model selection and goodness-of-fit}\label{sec4}

A critical aspect of implementing a phase-type cure model is selecting the dimension $r$. The choice of $r$ involves a trade-off between model fit and complexity. A larger $r$ increases model flexibility and often yields a better or equal-quality fit, but it also increases the number of parameters, the computational cost, and the risk of overfitting.

Traditional model selection criteria, such as the Akaike Information Criterion (AIC) and the Bayesian Information Criterion (BIC), are often used for this purpose. However, their application to phase-type distributions is not straightforward. The primary challenge is that PH distributions suffer from identifiability issues: different parameter sets $(\bfp, \bfT)$ of a given dimension $r$ can yield the same distribution, and a distribution of dimension $r$ may have a minimal representation of a lower dimension. This makes it difficult to determine the true number of free parameters, a key component of the penalty terms in AIC and BIC.

Given these challenges, the typical fitting process is iterative and exploratory. A common strategy is first to fit models with small dimensions and progressively increase $r$. At each step, one assesses the improvement in fit, often by analyzing changes in the loglikelihood or using visual diagnostics. Once a suitable dimension is identified, one may then explore if a more parsimonious model with a sparse structure can achieve a comparable fit, thereby reducing the number of parameters. For instance, one could consider a generalized Coxian structure where the parameters take the general form 
\begin{align*}
	\bfp = (\pi_1,...,\pi_p) \,, \quad \bfT= \left( \begin{array}{cccccc}
		-\lambda_1 &  \lambda_1 w_1 & 0 & \cdots  & 0 \\
		0 &  -\lambda_2 & \lambda_2 w_2  & \cdots  & 0 \\
		\vdots &  \vdots & \vdots & \ddots  & \vdots \\
		0 &  0 & 0 & \cdots  & -\lambda_r \\
	\end{array} \right)\,,
\end{align*}
where $\lambda_k>0$, $w_k \in [0,1]$, $k = 1,\dots, r$. Importantly, the fitting procedures previously described respect the zero-constraints within these structures, meaning that any parameter initially set to zero remains zero throughout the estimation process, ensuring that the final fitted model preserves the intended sparsity.
		
Simpler nested structures can also be considered. For example, setting $\bfp = (1, 0, \dots, 0)$ yields a standard Coxian distribution, while setting all $s_k=1$ results in a generalized Erlang distribution.

This manual process can be computationally demanding and time-consuming. Its success often depends heavily on a reasonable initial guess of the dimension $r$, which motivates the need for our proposed automated procedure described next.

\subsection{Automatic selection of $r$}
We now  introduce a data-driven procedure that automatically identifies a suitable initial dimension $r$. The core idea of the proposed method is to subsequently evaluate different dimensions and select the smallest one that adequately captures the survival pattern observed in the data. To assess adequacy, we compare fitted models against the nonparametric Kaplan-Meier (KM) estimator, which serves as a benchmark for the empirical survival experience. The procedure combines maximum-likelihood estimation via the EM algorithm with diagnostic comparisons to the KM survival curve and its confidence band.

The methodology starts by estimating the KM survival curve, $\widehat{S}_{\text{KM}}(t)$, along with its associated confidence bands, and by defining a finite, ordered sequence of candidate dimensions $r_{\min},\ldots,r_{\max}$. If preferred, a particular phase-type structure of the parameters, such as a general Coxian, can also be considered. For a given dimension $r$, a phase-type model is fitted using Algorithm~\ref{algo1}, and the resulting survival function $\widehat{S}_r(t)$ is evaluated across the grid of KM time points where $\widehat{S}_{\text{KM}}(t)<1$. To assess the model's performance, two diagnostic metrics are calculated: the {\em exception rate}, measured as the percentage of time points where $\widehat{S}_r(t)$ falls outside the confidence bands, and the {\em relative loss}, which compares the squared deviation of $\widehat{S}_r(t)$ from the KM estimate with the minimum squared deviation from the confidence band boundaries. This criterion measures whether the fitted model is, on average, closer to the KM curve itself than to the edges of its sampling uncertainty. The procedure iterates over $r$ sequentially, terminating early if a model yields no exceptions and achieves a distance to the KM curve that is smaller than its distance to the bands; otherwise, it retains the specification that minimizes the exception rate.

Since the EM algorithm is known to be sensitive to initialization and may converge to local maxima, particular care is taken in the choice of initial phase-type parameters in each iteration of $r$. For each fit, the initial probability vector  $\bfp$  and the sub-intensity matrix $\bfT$ are first randomly initialized subject to the imposed structural constraints, and subsequently adjusted using empirical survival information. Specifically, the empirical plateau of the KM estimator,
\begin{align*}
\widehat{S}_\infty = \lim_{t\to \infty} \widehat{S}_{\text{KM}}(t),
\end{align*}
is used to pre-calibrate the initial estimate of the cure fraction by assigning $\widehat{S}_\infty$ to the last component of the initial probability vector $\bfp$. The remaining probability mass ($1-\widehat{S}_\infty$) is distributed across the transient states. In addition, the sub-intensity matrix $\bfT$ is scaled so that the mean of the corresponding phase-type distribution approximately matches the empirical mean of the observed survival times. This pre-calibration yields reasonable starting values for the EM iterations and improves numerical stability and convergence behavior. The complete routine is summarized in Algorithm~\ref{alg:auto}.

\begin{algorithm}[!htbp]
\caption{Automatic selection of the dimension $r$}\label{alg:auto}
\begin{algorithmic}[1]
\Require Observations $(\xi_i,\delta_i)_{i=1,\dots,n}$, and minimum and maximum number of phases $r_{\min} < r_{\max}$.
\Ensure Dimension ${ r}$ and estimates for the  phase-type parameters $(\bfp,\bfT)$.
\State Compute the KM estimate $\widehat{S}_{\text{KM}}(t)$ and its confidence bands $[\widehat{S}_{\text{low}}(t),\,\widehat{S}_{\text{upp}}(t)]$.
\State Set $\widehat{S}_\infty=\lim_{t\to \infty}\widehat{S}_{\text{KM}}(t)$ and define the KM grid $\mathcal{T}=\{t:\widehat{S}_{\text{KM}}(t-) < \widehat{S}_{\text{KM}}(t), \, \widehat{S}_{\text{KM}}(t)<1\}$.
\State Initialize  $\text{er}_{\text{prev}}\leftarrow +\infty$.
\For{$r=r_{\min},\ldots,r_{\max}$}
    \State Initialize a phase-type model of dimension $r$ and parameters $(\bfp, \bfT)$ with $\bfT_{r,1:r}=\0$. 

    \State Set $\pi_r=\widehat{S}_\infty$ , $\bfp_{1:(r-1)} = \bfp_{1:(r-1)} (1-\widehat{S}_\infty) / (\sum_{l=1}^{r}\pi_l)$, and $\bfT = \bfT / (n^{-1}\sum_{i=1}^{n}\xi_i)$
    \State Fit the model by maximum likelihood using Algorithm~\ref{algo1}, obtaining $\widehat{S}_r(t)$.
    \State Evaluate $\widehat{S}_r(t)$ on $\mathcal{T}$ and compute:
    \begin{itemize}
        \item Exception rate $\text{er}_r$: proportion of $t\in\mathcal{T}$ such that $\widehat{S}_r(t)\notin[\widehat{S}_{\text{low}}(t),\,\widehat{S}_{\text{upp}}(t)]$.
        \item $\text{err}_2=\sum_{t\in\mathcal{T}} \big(\widehat{S}_r(t)-\widehat{S}_{\text{KM}}(t)\big)^2$.
        \item $\text{err}_1=\sum_{t\in\mathcal{T}} \min\Big\{\big(\widehat{S}_r(t)-\widehat{S}_{\text{low}}(t)\big)^2,\ \big(\widehat{S}_r(t)-\widehat{S}_{\text{upp}}(t)\big)^2\Big\}$.
    \end{itemize}
    \State \textbf{Early stop:} if $\text{er}_r=0$ and $\text{err}_2<\text{err}_1$, set $(\hat{\bfp},\hat{\bfT})\leftarrow (\bfp,\bfT)$, $\hat{r}\leftarrow r$ and \textbf{break}.
    \State \textbf{Otherwise:} if $\text{er}_r<\text{er}_{\text{prev}}$, set $(\hat{\bfp},\hat{\bfT})\leftarrow (\bfp,\bfT)$, $\hat{r}\leftarrow r$, and $\text{er}_{\text{prev}}\leftarrow \text{er}_r$.
\EndFor
\State Return the selected dimension  $\hat r$ and estimates $(\hat{\bfp},\hat{\bfT})$.
\end{algorithmic}
\end{algorithm}

This method provides an intuitive and justifiable initial estimate for $r$. From this starting point, the model can be further refined by examining the loglikelihood profile over nearby dimensions, exploring sparser nested structures, or applying the goodness-of-fit diagnostics described in the following section to ensure that the selected specification adequately captures the underlying data structure. It is important to note that this automatic selection procedure is primarily designed for settings without covariate information, where a single marginal survival curve can be meaningfully compared to the KM estimator. In the presence of covariates, survival functions become conditional on $\bfx$, and a direct KM-based comparison is no longer appropriate. Nevertheless, the procedure can still be employed to obtain a sensible initial choice of the dimension $r$ and corresponding parameter values, which may then serve as starting points for fitting the full regression model.

\subsection{Goodness-of-fit}

Once a suitable model has been selected, it is essential to assess whether it provides an adequate description of the data. In the cure-rate setting, standard goodness-of-fit tools developed for classical survival models must be interpreted with care, as they do not account for the presence of a cured subpopulation. We therefore rely on residual-based diagnostic methods specifically tailored to cure models. In particular, following the framework of~\cite{peng2017residual}, we employ Cox-Snell-type residuals to assess the overall fit of the model, together with modified Cox-Snell residuals that directly target the fit of the latency component. These diagnostics are well suited to phase-type cure models and allow for both graphical and quantitative goodness-of-fit assessment.  

\subsubsection{Cox-Snell residuals}

For a survival model with survival function $S(t)$, the Cox-Snell (CS) residuals are defined as
$c_i = -\log(S(\xi_i))$,
where $\xi_i$ denotes the observed time for subject $i$. In classical survival analysis, if the model is correctly specified, the CS residuals can be viewed as a censored sample from a standard exponential distribution with mean one ($\text{Exp}(1)$). However, 
in the cure-rate setting, this interpretation requires refinement. For our phase-type cure model, the overall survival function is
$S_\tau(t) = \boldsymbol{\pi}\exp(\mathbf{T}t)\mathbf{e}$,
and the corresponding CS residuals are
$c_i = -\log(\boldsymbol{\pi}\exp(\mathbf{T}\xi_i)\mathbf{e})$.
As shown by~\cite{peng2017residual}, even when the cure model is correctly specified, these residuals do not follow a pure exponential distribution. Instead, they form a mixed-type distribution with a continuous exponential component and a point mass at $-\log(1-p)$, corresponding to cured individuals. Importantly, this point mass is always censored, so the residuals can still be treated as a censored sample from an $\text{Exp}(1)$ distribution.

Consequently, goodness-of-fit can be assessed by computing the empirical cumulative hazard function of the residuals using the Nelson-Aalen estimator and comparing it to the identity line corresponding to the theoretical cumulative hazard of an $\text{Exp}(1)$ distribution. Any systematic deviations from this reference line serve as a clear indicator of a lack of fit in the overall model structure.

\subsubsection{Modified Cox-Snell residuals}

While standard CS residuals are useful for assessing the overall fit, they are relatively insensitive to misspecification in the latency component of a cure model. To overcome this limitation,~\cite{peng2017residual} proposed modified CS residuals that focus explicitly on the susceptible subpopulation.
In the cure rate phase-type model, the survival function conditional on being susceptible is given by Proposition~\ref{prop:cure_rate}. The modified CS residuals are then defined as
$c_i^{u} = -\log(S_u(\xi_i)).$
Under correct specification of the latency model, these residuals behave analogously to CS residuals in classical survival analysis and can be viewed as a censored sample from an $\text{Exp}(1)$ distribution.

In practice, since the cure status is not fully observed, the distribution of the modified residuals is estimated using a weighted Nelson-Aalen estimator, with weights given by the posterior probabilities of being susceptible, that is, 
\begin{align*}
w_i & =\delta_i+\left(1-\delta_i\right) \frac{p S_u(\xi_i)}{1-p+p S_u(\xi_i)}\,, \quad i = 1,\dots, n \,.
\end{align*}

The estimated cumulative hazard of the modified residuals is then compared to the identity line. Substantial departures from this reference indicate misspecification of the latency component.

To complement the graphical assessment, we also compute the Cram\'er-von Mises (CvM) distance between the empirical distribution of the modified CS residuals, $\widehat{F}_n^{CS}(t)$, and the $\text{Exp}(1)$ distribution, $F_0(t) =1 - \exp(-t)$, as proposed by~\cite{peng2017residual}. Specifically, the distance between the two distributions is computed as
\begin{align*}
D(\widehat{F}_n^{CS}(t)) = \int_{0}^{\infty}\left(\widehat{F}_n^{CS}(t) -  F_0(t)\right)^2\mathrm{d}F_0(t)
\end{align*}
and smaller values of $D(\widehat{F}_n^{CS}(t))$ indicate a closer agreement with the theoretical exponential distribution and hence a better fit of the latency model. Importantly, this CvM criteria can also be employed for the standard CS residuals. 

\begin{remark}\rm 
Although the preceding discussion is presented for a single fitted survival curve, these residual-based diagnostics extend naturally to the phase-type MoE regression setting. In the presence of covariates, CS and modified CS residuals are computed using the fitted conditional survival functions $S_\tau(t\mid\bfx)$ and $S_u(t\mid\bfx)$, respectively. The resulting residuals can still be interpreted as censored samples from an $\text{Exp}(1)$ distribution under correct model specification, allowing to apply the same graphical and quantitative diagnostics described above.
\end{remark}

\section{Simulation studies}\label{sec5}

To evaluate the performance of our proposed cure rate phase-type model, we conducted a comprehensive simulation study. We start by generating data from a mixture cure model where the true cure rate and latency distribution are known, allowing us to assess the model's ability to recover the underlying structure.

The latency component for the susceptible population is simulated from a mixture of three Gamma distributions, $\Gamma(1, 4)$, $\Gamma(4, 2)$, and $\Gamma(8, 1	)$, with corresponding mixture  weights $0.1$, $0.5$, and $0.4$. This specification yields a multi-modal density, shown in Figure~\ref{fig:pdf_mix}. The susceptible fraction is set to $p=0.8$, corresponding to a true cure rate of $0.2$. Independent censoring times are simulated from a $\mbox{Unif}(0, 30)$ distribution.  In total, $5,000$ observations are generated under this setting.

\begin{figure}[!htbp]
\centering
\includegraphics[width=0.6\textwidth, trim= 0in 0in 0in 0in,clip]{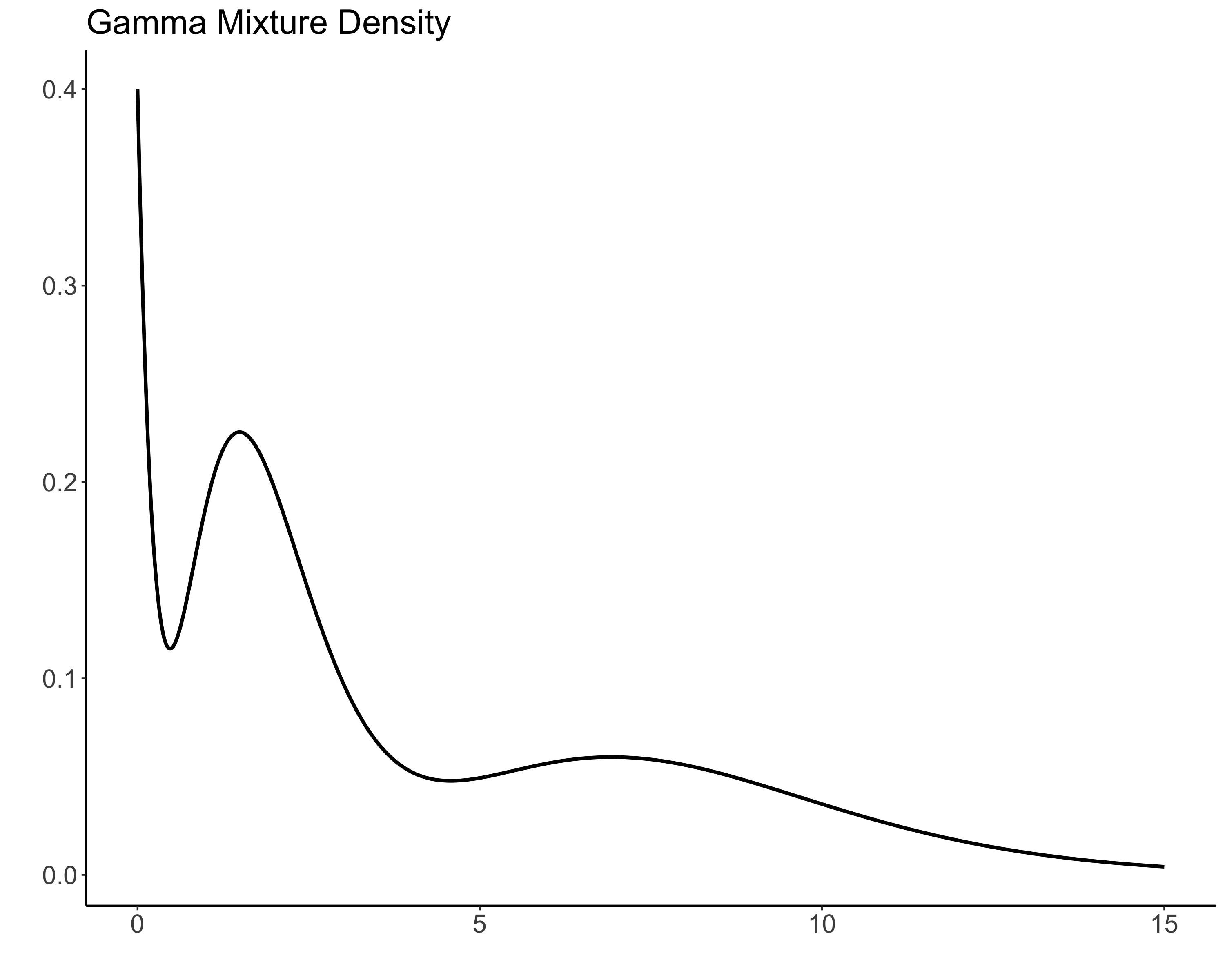}
\caption{Density function of the latency: mixture of three Gamma distributions, $\Gamma(1, 4)$, $\Gamma(4, 2)$, and $\Gamma(8, 1)$, with weights $0.1$, $0.5$, and $0.4$.}
\label{fig:pdf_mix}
\end{figure}

We then fit the proposed cure rate phase-type model to the simulated data, using a general Coxian structure of dimension $r=9$ to allow for sufficient flexibility in capturing the complex latency distribution. For comparison,  we also fitted several traditional know models, including standard parametric mixture models (Exponential and Weibull) as well as two-component mixture models (Weibull-Exponential and Weibull-Weibull),  which were estimated using the \texttt{cuRe} R package~\citep{jensen2022fitting}.

The results of the model comparison are summarized in Table~\ref{tab:chisq_statBC}. The phase-type model achieves a substantially higher loglikelihood than all other models, indicating a superior overall fit, while also providing an estimate of the susceptible fraction ($\hat{p}=0.7993$) that is remarkably close to the true value of 0.8.
This conclusion is reinforced by the goodness-of-fit diagnostics based on the CvM criterion. The improvement is particularly pronounced for the modified CS residuals, which directly assess the latency component. Here, the CvM statistic for the phase-type model (0.49) is considerable smaller than that of the next best traditional model (4.18), highlighting its unique ability to accurately capture the complex latency distribution.

\begin{table}[!htbp]
\centering
\caption{Model comparison results for the simulated dataset. The phase-type model achieves the highest loglikelihood and an accurate estimate of $p$, with substantially smaller CvM distances, particularly for the modified CS residuals.}\label{tab:chisq_statBC}
\begin{tabularx}{\textwidth}{lcccc}
  \toprule
  \textbf{Model} & $\hat{p}$ & \textbf{Loglikelihood} & \thead[l]{\textbf{CvM criterion} \\ \textbf{for CS residuals }} &  \thead[l]{\textbf{CvM criterion} \\ \textbf{for modified} \\ \textbf{CS residuals }} \\ 
  \midrule
  Exponential   &  0.8045   &  -10,338.95  &   32.00   &   7.36   \\
  Weibull   &  0.8057   &  -10,338.45  &   31.84   &   7.03   \\
  Weibull-Exponential   &  0.8010   &  -10,336.63  &   31.85   &   7.27   \\
  Weibull-Weibull   &  0.8130   &  -10,268.58  &   30.39   &   4.18   \\
  Phase-type   &  0.7993   &  -10,128.73  &   28.35   &   0.49   \\
  \bottomrule 
\end{tabularx}
\end{table}

Figure~\ref{fig:km_vs_fitted} visually confirms the adequacy of the fit of the phase-type model, with its estimated survival curve closely overlapping the KM curve from the simulated data across the entire time range, including the plateau associated with the cured fraction. In contrast, the fitted curves from the alternative models diverge from the KM estimate in some regions, particularly at early and intermediate times, highlighting their limited flexibility in capturing the complex shape of the underlying latency distribution.

\begin{figure}[!htbp]
\centering
\includegraphics[width=\textwidth, trim= 0in 0in 0in 0in,clip]{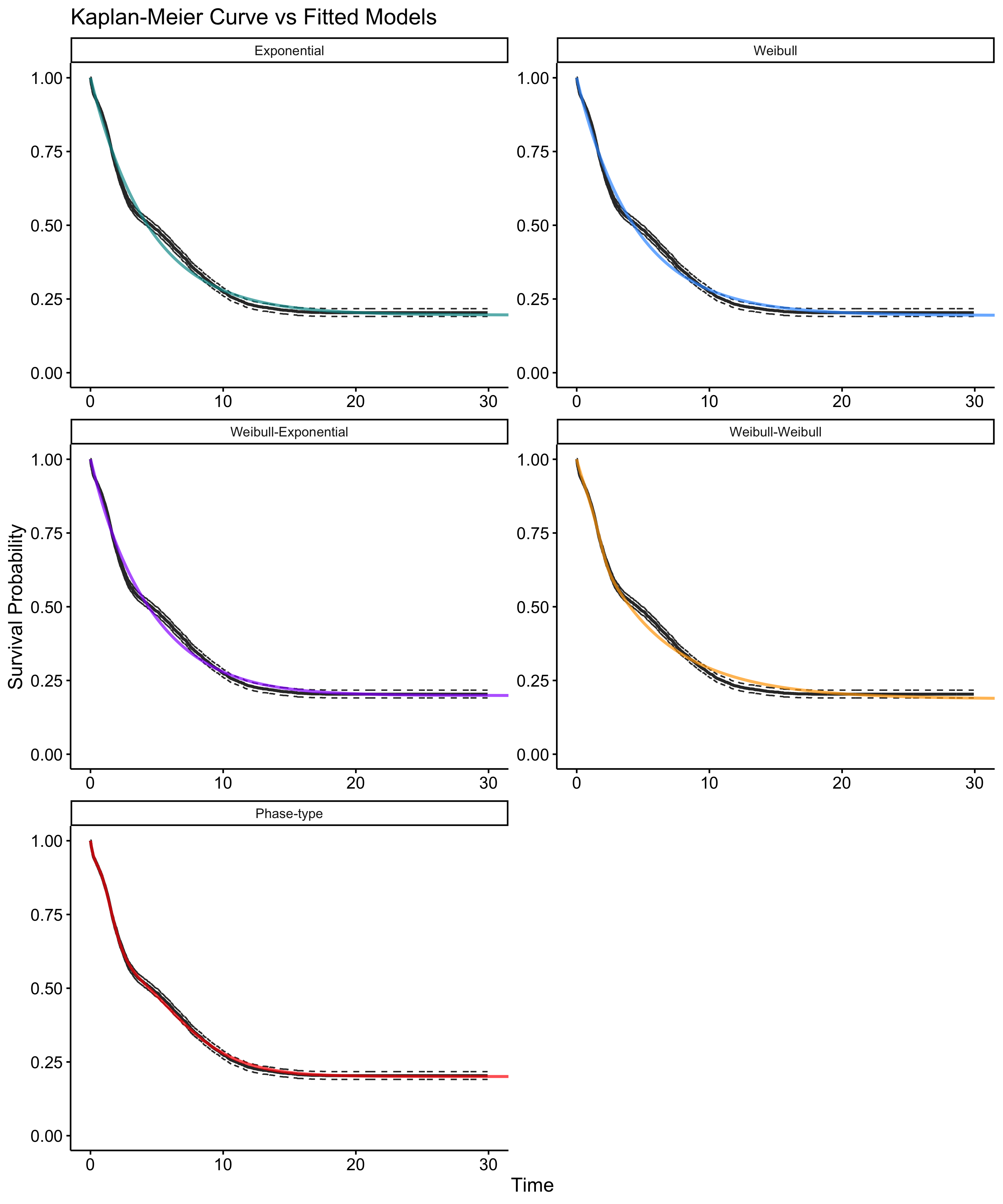}
\caption{KM estimates (black step functions with confidence bands) and fitted survival curves for the simulated dataset under each model specification. While standard models like the Exponential and Weibull show significant divergence, the phase-type model's survival curve (bottom panel) overlaps the KM curve almost perfectly, including the plateau representing the cured fraction.}
\label{fig:km_vs_fitted}
\end{figure}

To examine the effect of censoring on model performance, we varied the upper limit $U$ of the uniform censoring distribution, $\mbox{Unif}(0, U)$, from 15 to 30. Since stronger censoring is known to impair the estimation of the cure fraction, this experiment allows us to evaluate the robustness of each model under different censoring intensities. For each censoring level, we generated 100 replicated datasets and computed the average estimated susceptible fraction $\hat{p}$, the average loglikelihood, and the average CvM criteria for both the CS and modified CS residuals.  In each replication, the phase-type model was fitted using a general Coxian structure of the parameters, with the number of phases $r$ automatically selected according to the procedure described in the previous section, and restricted to a maximum of $r=10$.

Figures~\ref{fig:cure_ll} and~\ref{fig:sim_residuals} summarize the average results across the 100 replications for each censoring level. In Figure~\ref{fig:cure_ll}, we observe that as censoring weakens (larger $U$), all models show improvements in loglikelihood and cure rate estimation. However, the phase-type model consistently achieves the highest loglikelihood and the most accurate estimate of the susceptible fraction, remaining close to the true value even under strong censoring. To better visualize these differences, we also included a plot showing the loglikelihood gap between each of the other four models and the phase-type specification (top right), which always achieved the highest loglikelihood across all settings.

The residual-based CvM criteria (Figure~\ref{fig:sim_residuals}) further confirm the superior performance of the phase-type model, which maintains the smallest CvM distances for both the standard and modified CS residuals. Overall, these results highlight the robustness of the phase-type model in capturing both the cure fraction and the underlying latency distribution, even under severe censoring.

\begin{figure}[!htbp]
\centering
\includegraphics[width=0.49\textwidth, trim= 0in 0in 0in 0in,clip]{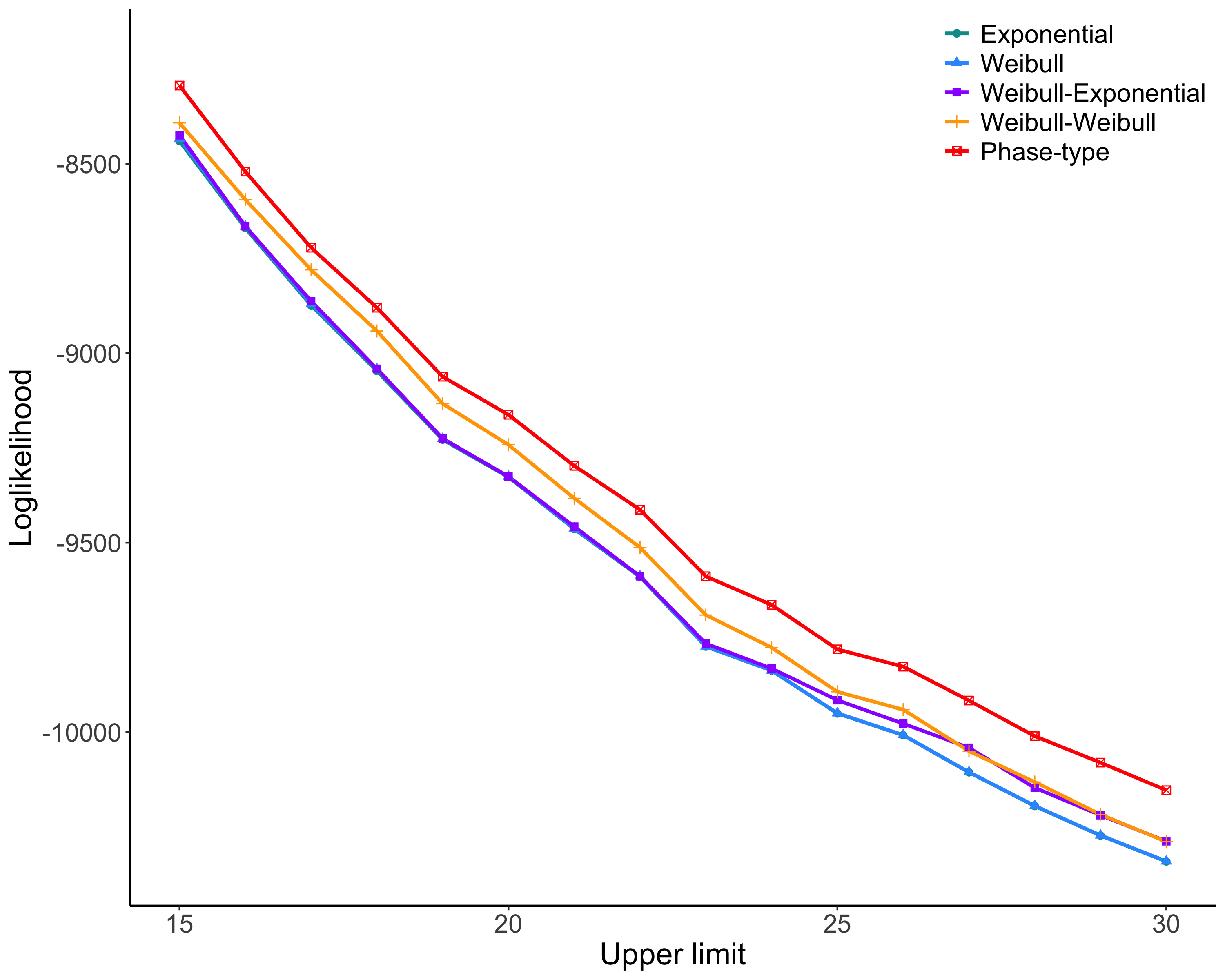}
\includegraphics[width=0.49\textwidth, trim= 0in 0in 0in 0in,clip]{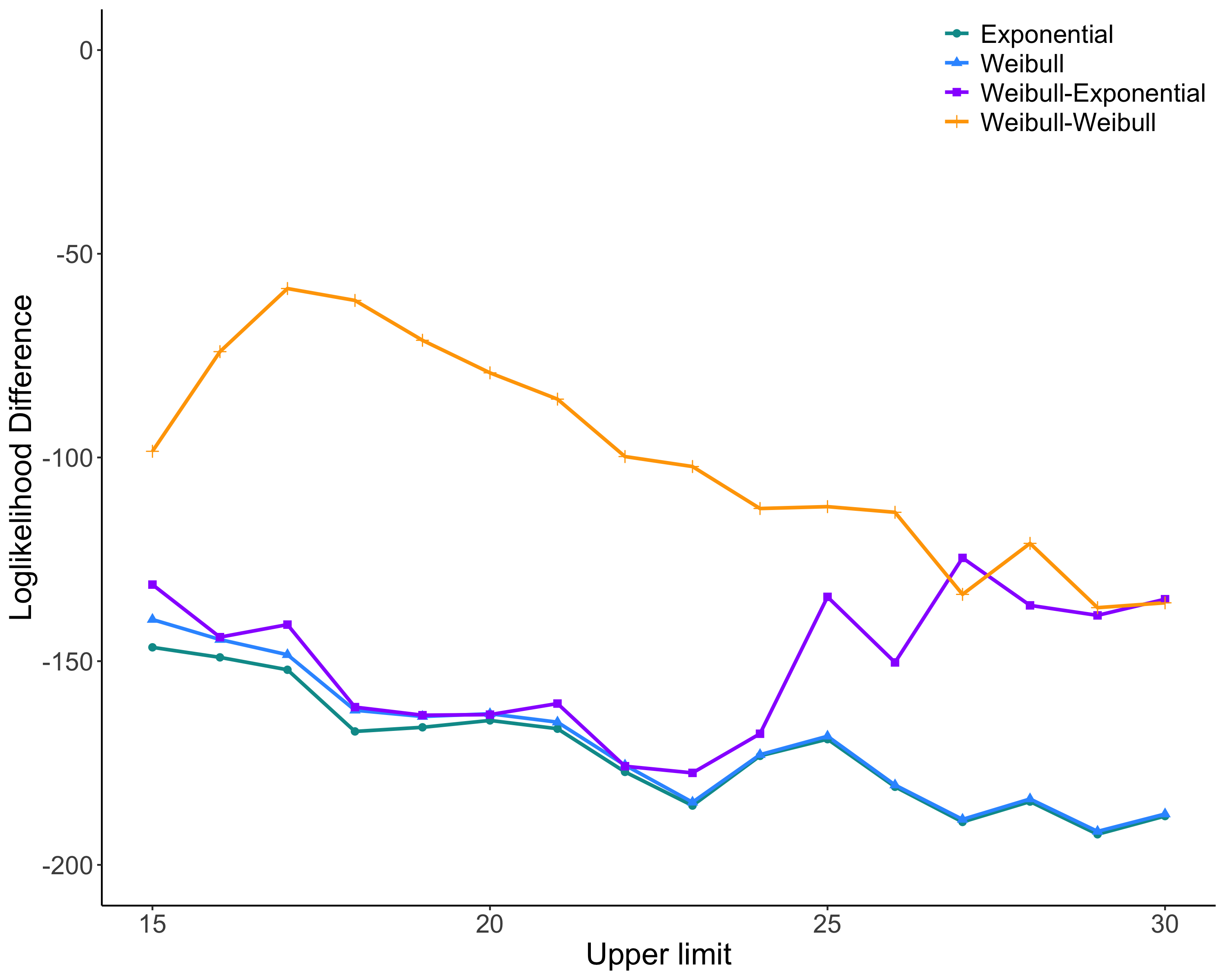}
\includegraphics[width=0.49\textwidth, trim= 0in 0in 0in 0in,clip]{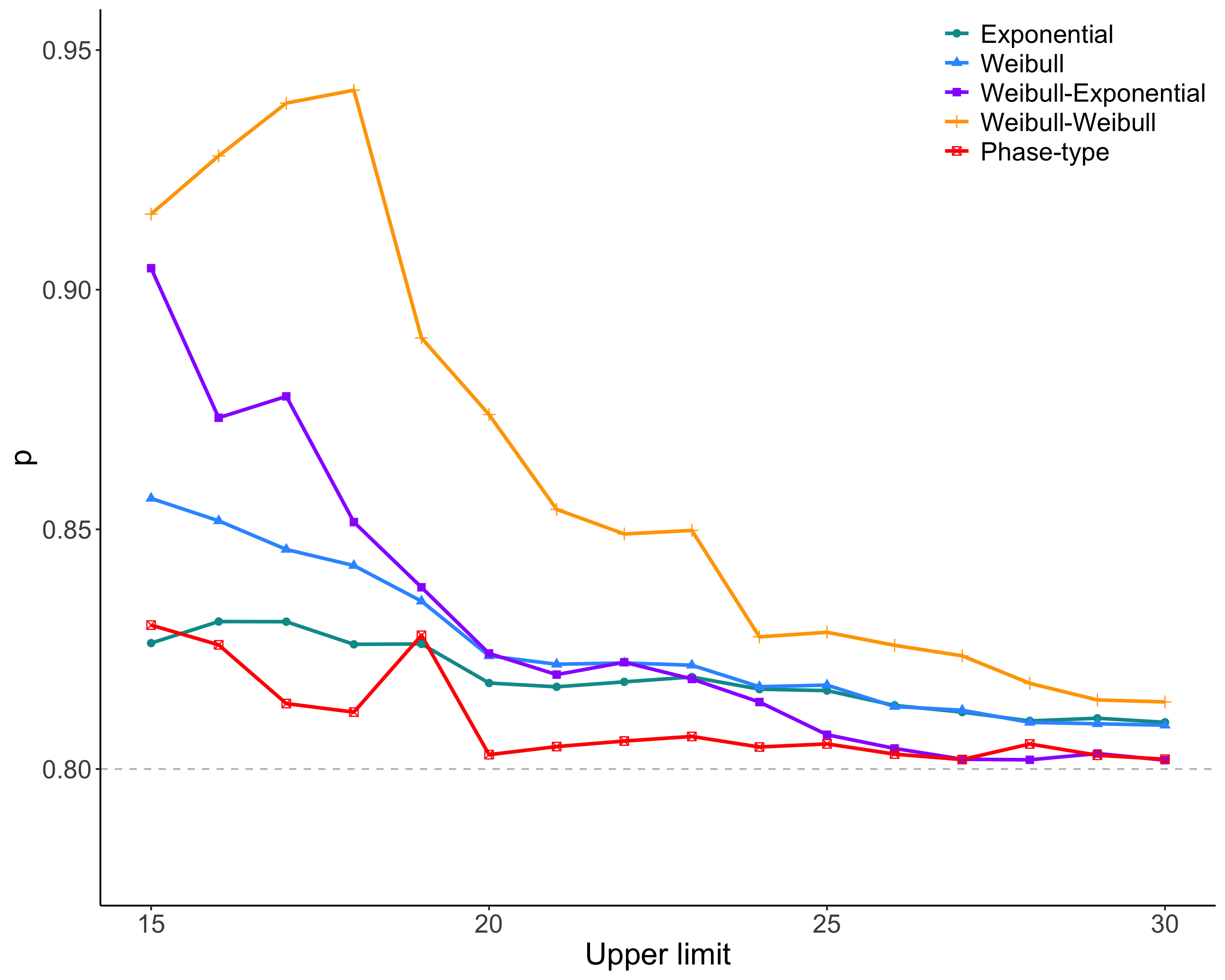}
\caption{Average models' performance across different censoring levels in 100 replications. Top left: average loglikelihoods, with the phase-type model (red) consistently highest. Top right: loglikelihood differences relative to the phase-type model, all negative. Bottom: average estimated susceptible fraction $\hat{p}$, with the true value $p=0.8$ shown as a dashed line. } 
\label{fig:cure_ll}
\end{figure}

\begin{figure}[!htbp]
\centering
\includegraphics[width=0.49\textwidth, trim= 0in 0in 0in 0in,clip]{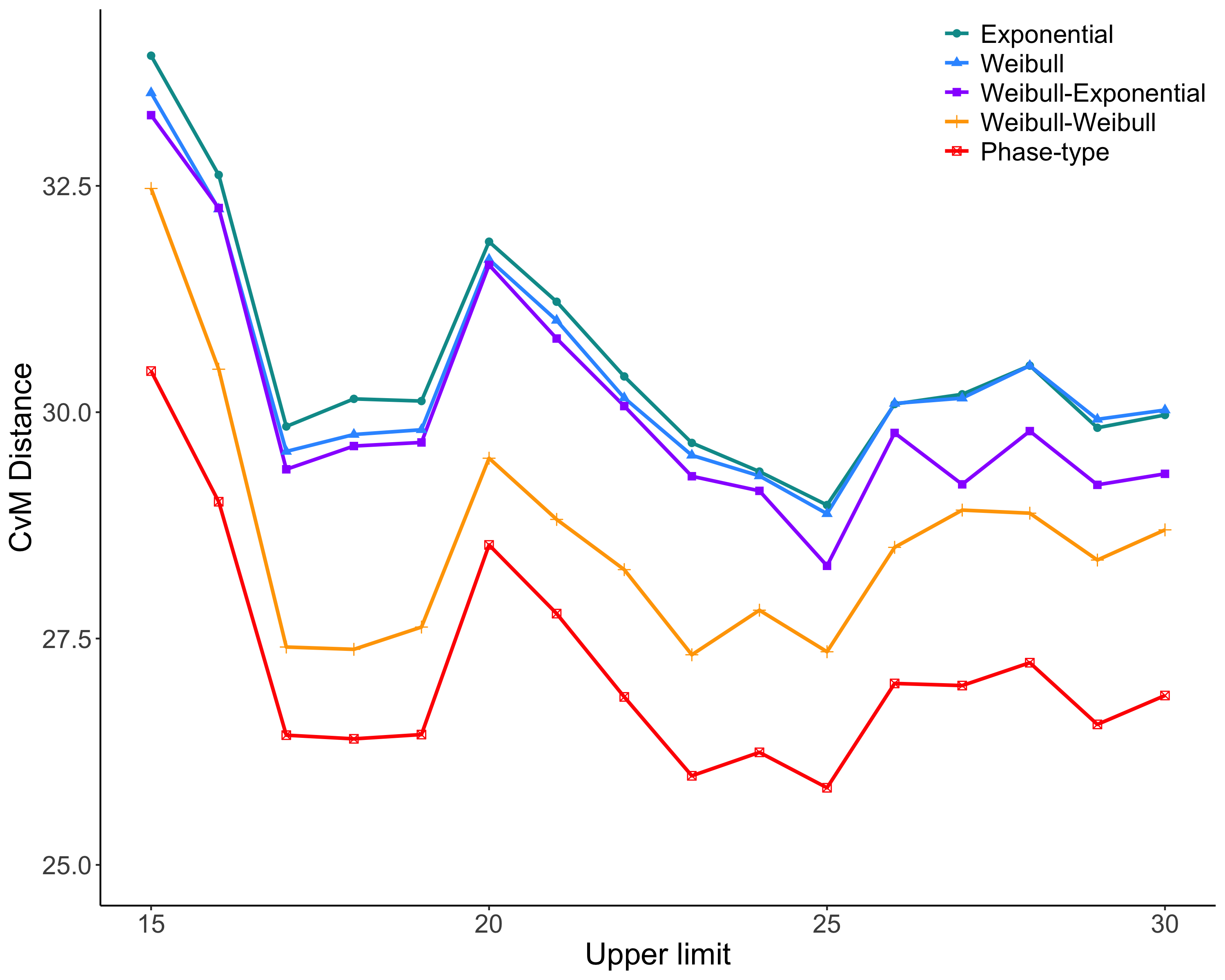}
\includegraphics[width=0.49\textwidth, trim= 0in 0in 0in 0in,clip]{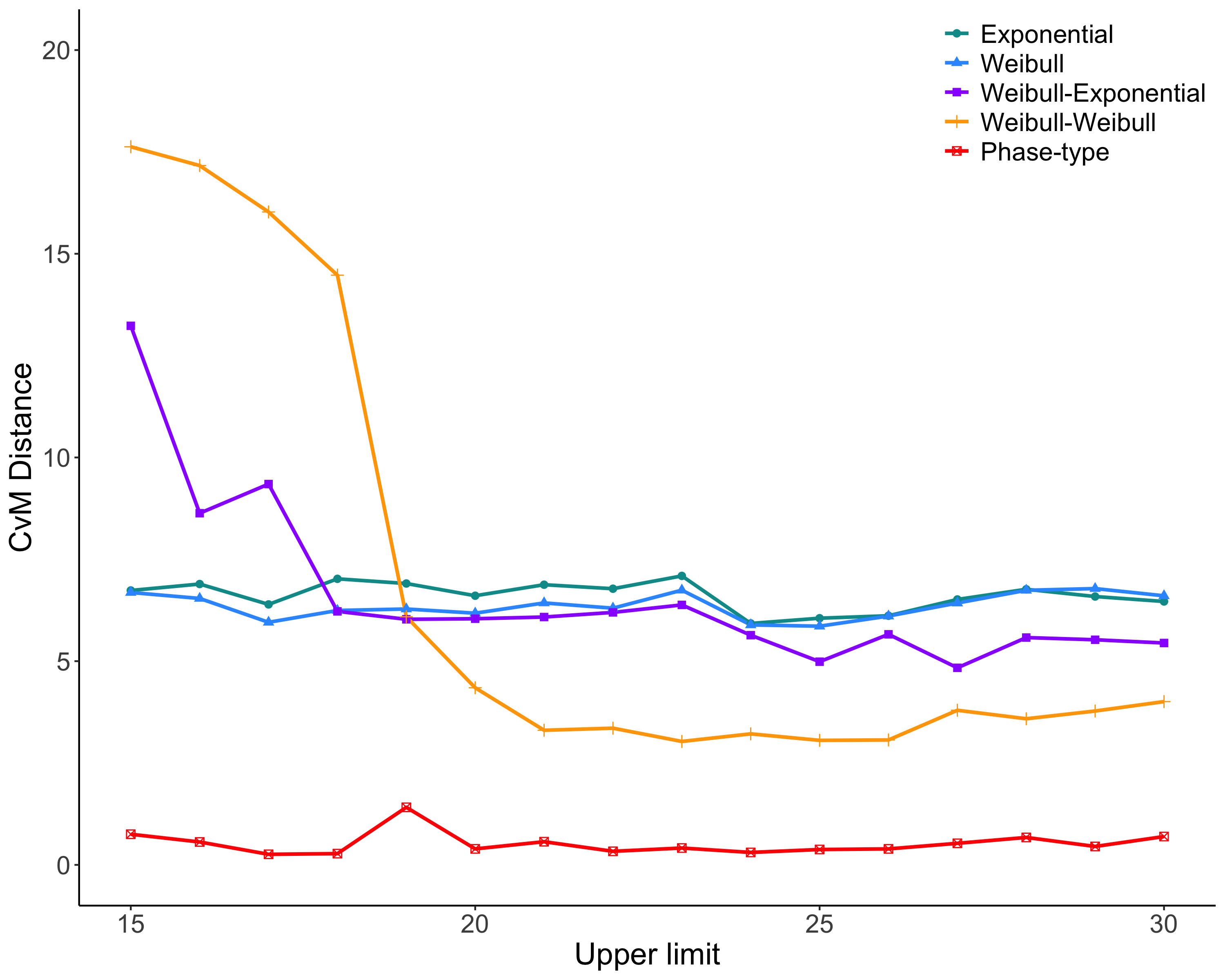}
\caption{Average Cram\'er-von Mises distances for CS (left) and modified CS residuals (right) across censoring levels. Lower values indicate a better fit. The phase-type model (red) achieves smaller distances in both cases, with a more notable difference in the modified residuals, demonstrating its superior fit to the latency distribution.} 
\label{fig:sim_residuals}
\end{figure}

\section{Leukemia data}\label{sec6}

To demonstrate the applicability of the proposed cure rate phase-type  model to real-world data, we analyzed the {\em acute lymphoblastic leukemia} dataset originally presented in~\citet{kersey1987comparison} and publicly available in, for instance, the \texttt{mixcure} R package~\citep{mixcure}. This dataset, widely used in the cure modeling literature, records the time to leukemic recurrence for patients who received either autologous or allogeneic bone marrow transplants. The data include 91 patients, followed from March 1982 to May 1987, of whom 46 received allogeneic and 45 autologous transplants. 
As several patients remained in long-term remission, the dataset exhibits a clear cured fraction, making it a benchmark example for cure models.

We model the data using a phase-type MoE specification, where the type of transplant (Allogeneic vs. Autologous) is incorporated as a categorical covariate affecting both the incidence and latency. Specifically, we fit a cure phase-type MoE model of dimension $r = 4$, general Coxian structure, and a lognormal inhomogeneity function.
Figure~\ref{fig:fit_leukemia} displays the KM survival estimates for each group along with the fitted survival curves from the phase-type MoE model, where we observe that the model successfully captures the initial steep decline in survivals as well as the subsequent plateauing of the curves. The estimated cure fractions are $1 - \hat{p}_{\text{Allogeneic}} = 0.2621$ and $1 - \hat{p}_{\text{Autologous}} = 0.1972$, indicating a slightly higher long-term remission probability among patients receiving allogeneic transplants.

\begin{figure}[!htbp]
\centering
\includegraphics[width=0.6\textwidth, trim= 0in 0in 0in 0in,clip]{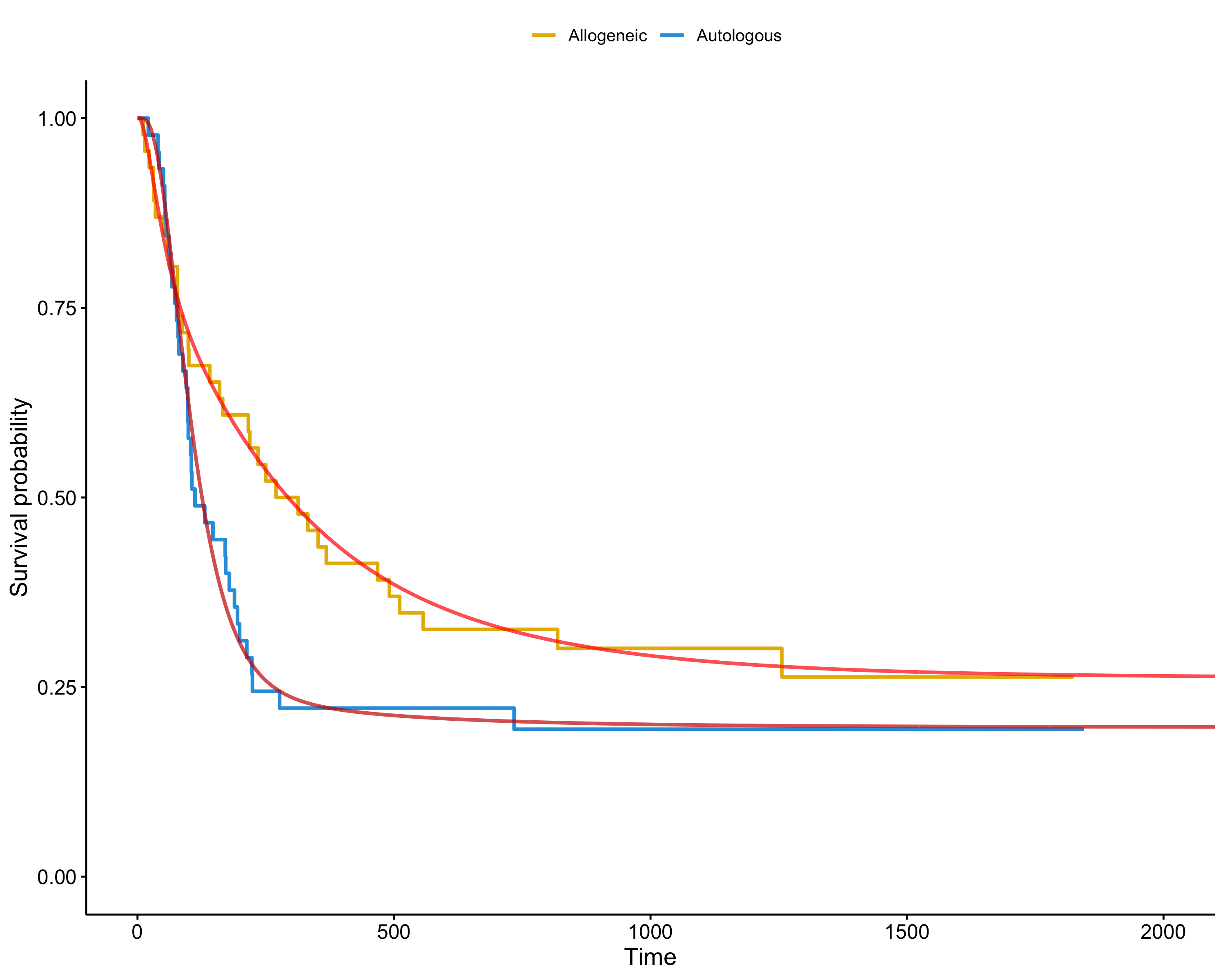}
\caption{KM estimates and fitted survival curves from the phase-type MoE model for the leukemia data, stratified by transplant type (Allogeneic in yellow, Autologous in blue). The model captures both the initial steep decline and the plateau for both populations.} 
\label{fig:fit_leukemia}
\end{figure}

To formally assess the goodness-of-fit, we analyzed the residuals of the model. Figure~\ref{fig:residuals_leukemia} shows the empirical cumulative hazards estimated via the Nelson-Aalen estimator of the standard and modified CS residuals. The close alignment of the empirical curves with the identity line, provides strong evidence that the residuals follow an $\mbox{Exp}(1)$ distributions, indicating that the model is well-specified. We further quantified the quality of the fit using the CvM criterion computed on the modified residuals. The cure phase-type MoE model achieves a value of 3.6, outperforming the mixture models considered in~\citet{kersey1987comparison}, namely the Exponential (21.0), Weibull (8.9), and Lognormal (4.9) specifications.

\begin{figure}[!htbp]
\centering
\includegraphics[width=0.49\textwidth, trim= 0in 0in 0in 0in,clip]{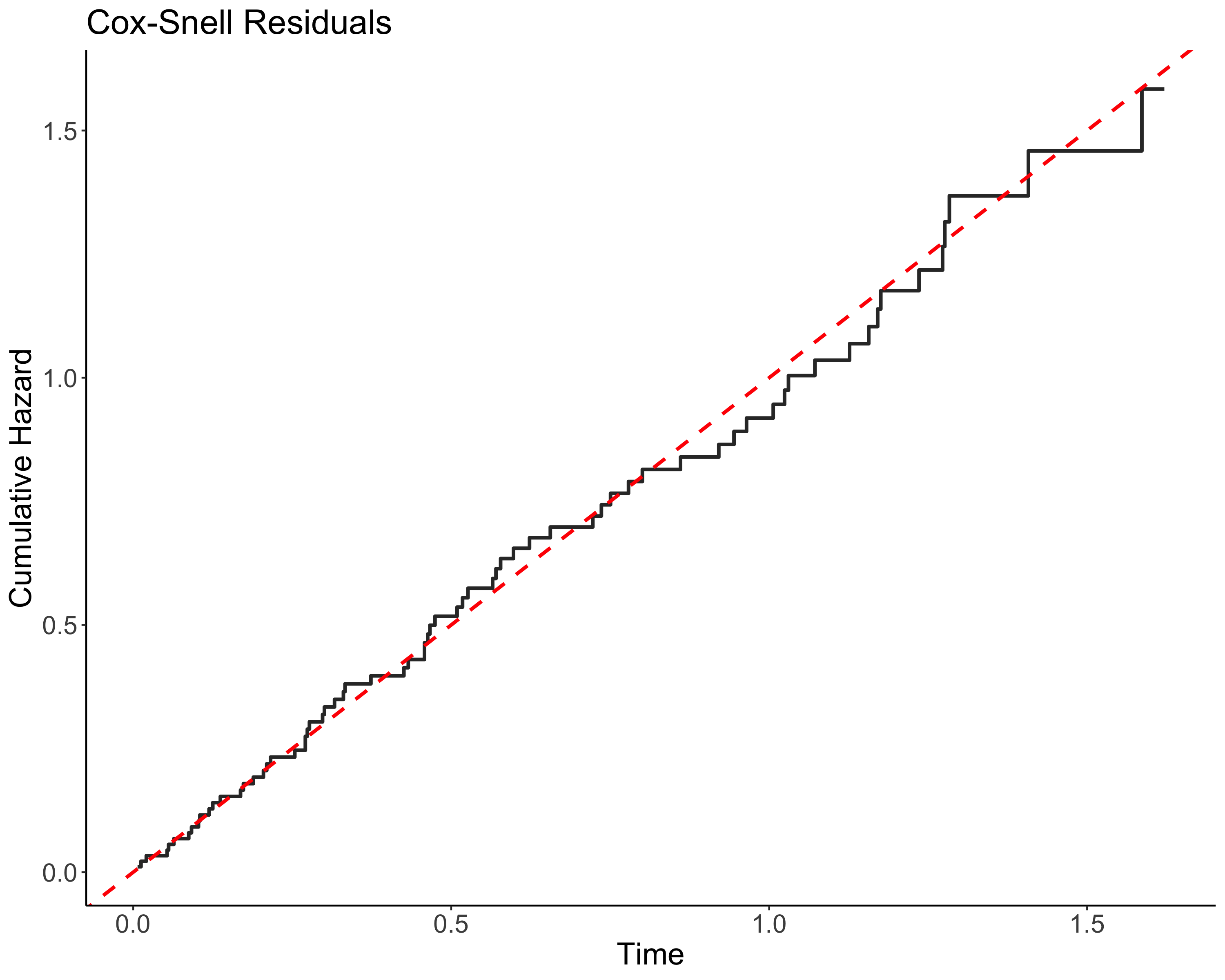}
\includegraphics[width=0.49\textwidth, trim= 0in 0in 0in 0in,clip]{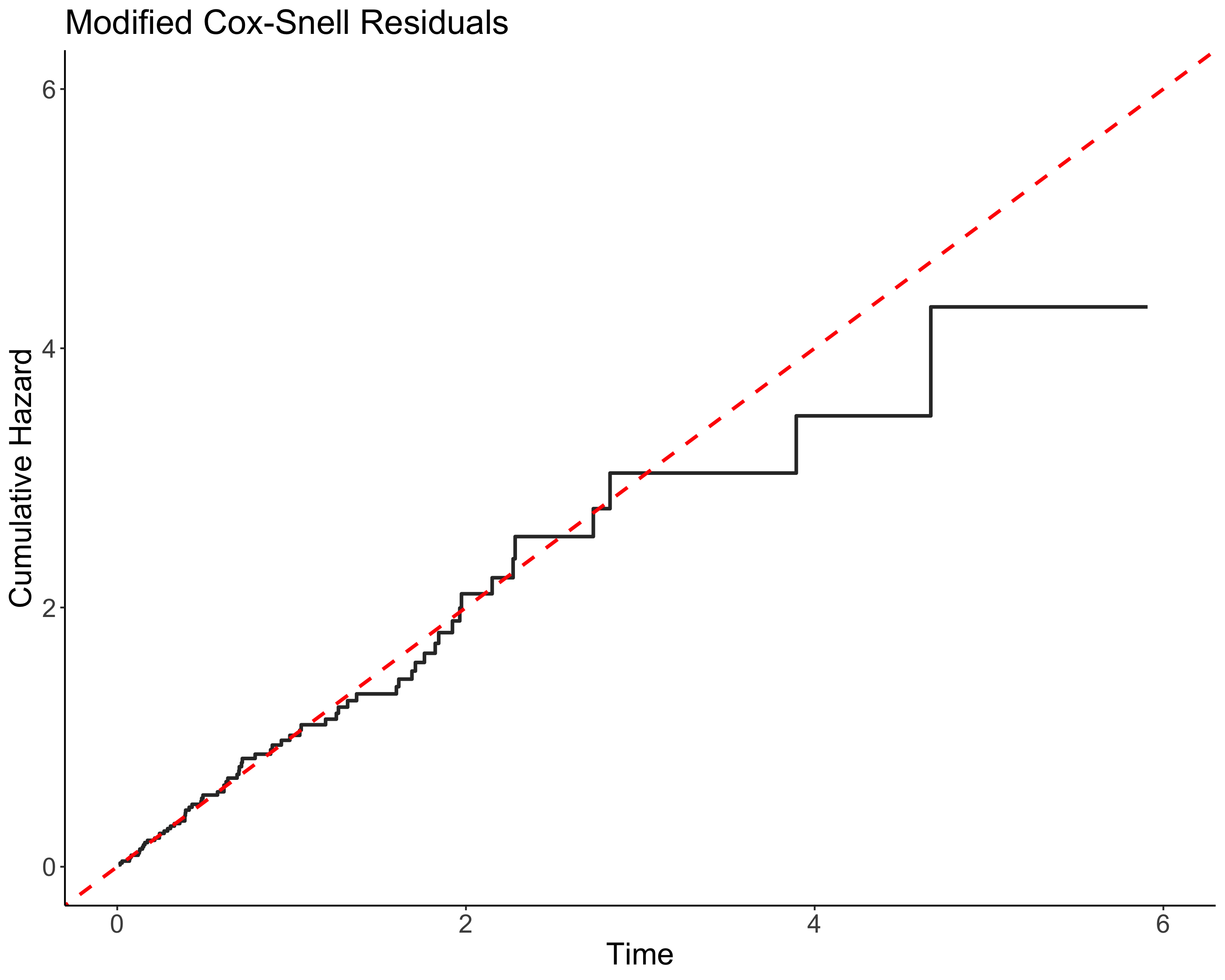}
\caption{Nelson-Aalen estimates of the cumulative hazard for CS (left) and modified CS (right) residuals, compared to the $\mbox{Exp}(1)$ reference line (dashed). The close alignment in both panels indicates adequate model fit.} 
\label{fig:residuals_leukemia}
\end{figure}

\section{Conclusion} \label{sec:conc}

In this paper, we have introduced a novel class of cure rate models based on phase-type distributions. By defining a Markov jump process with two absorbing states, one representing the event of interest and another representing immunity, our framework provides a dynamic interpretation of the curing mechanism that generalizes the static mixture cure model. The proposed approach offers several advantages, including explicit closed-form expressions for the cure rate and the latency distribution, and a flexible regression framework via the Mixture-of-Experts that can approximate any other cure rate model. 

Our numerical experiments demonstrated the practical benefits of this methodology. In simulations, the phase-type specification achieved substantially higher loglikelihoods and more accurate cure rate estimates than standard parametric alternatives, with goodness-of-fit diagnostics confirming its ability to capture complex, multimodal latency distributions even under strong censoring. The application to the leukemia dataset further illustrated its utility in real-world settings.

Several directions for future research emerge from this work. An important and natural extension includes multivariate cure models using multivariate phase-type distributions.

\subsection*{Funding}
The research of M.B. was supported by the Carlsberg Foundation, grant CF23-1096.

\subsection*{Competing interests}
No competing interest is declared.

\subsection*{Author contributions statement}

M.B. and J.Y. contributed equally to both the theory and implementation. 

\newpage
\bibliography{curePH_new.bib}

\end{document}